\newtheorem{theorem}{Theorem}
\newtheorem{subtheorem}{Sub-Theorem}[theorem]
\theoremstyle{definition}
\newtheorem{defn}{Definition}
\newcommand{\qeds}{\nobreak \ifvmode \relax \else
      \ifdim\lastskip<1.5em \hskip-\lastskip
      \hskip1.5em plus0em minus0.5em \fi \nobreak
      \vrule height0.75em width0.5em depth0.25em\fi}
\begin{document}

\preprint{APS/123-QED}

\title{Simplifying errors by symmetry and randomisation}

\author{James Mills$^{1,2}$}
\email{J.Mills-7@sms.ed.ac.uk}
\author{Debasis Sadhukhan$^{1}$}
\author{Elham Kashefi$^{1,3}$}
 \affiliation{$^1$ School of Informatics, University of Edinburgh, 10 Crichton Street, Edinburgh EH8 9AB, Scotland}
  \affiliation{$^2$ Quandela SAS, 7 Rue Léonard de Vinci, 91300 Massy, France}
\affiliation{$^3$ Laboratoire d’Informatique de Paris 6, Centre National de la Recherche Scientifique, Sorbonne Université, 4 place Jussieu,
75252 Paris Cedex 05, France}

\date{\today}

\begin{abstract}

We present a set of methods to generate less complex error channels by quantum circuit parallelisation. The resulting errors are simplified as a consequence of their symmetrisation and randomisation. Initially, the case of a single error channel is analysed; these results are then generalised to multiple error channels. Error simplification for each method is shown to be either constant, linear, or exponential in terms of system size. Finally, example applications are provided, along with experiments run on superconducting quantum hardware and numerical simulation. These applications are: (1) reducing the sample complexity of matrix-inversion measurement error mitigation by error symmetrisation, (2) improving the effectiveness of noise-estimation circuit error mitigation by error randomisation, and (3) improving the predictability of noisy circuit performance by error randomisation.

\end{abstract}

\maketitle

\section{\label{Idea} Introduction}

Scaling up quantum hardware requires increasing system sizes while also dramatically reducing error rates. The ultimate goal of fault-tolerant quantum computing may only be reached when sufficiently large numbers of qubits possessing sufficiently small error rates are achieved. This should enable quantum error correction and the arbitrary suppression of errors \cite{gottesman_quantum_2016,aharonov_fault-tolerant_1997,gottesman_theory_1998,preskill_fault-tolerant_1998,cross_comparative_2009}. 
However, a consequence of scaling up quantum devices towards this objective is the exponential increase in complexity of errors with system size. Where error complexity is defined as the number of coefficients required to fully describe the error channel. This increasing complexity motivates the use of noise tailoring techniques, like randomised compiling \cite{wallman_noise_2016, hashim_randomized_2021}, which may be applied to transform general noise into less complex forms. Also, without quantum error correction, noise always upper bounds circuit sizes that can be successfully run on quantum devices. This can necessitate the use of smaller circuits mapped onto fewer qubits than the total number available on a given device, which motivates the use of parallelisation. Quantum circuit parallelisation, the running of circuits concurrently in parallel on a quantum device, is generally used as a means of improving quantum algorithm time efficiency \cite{cade_strategies_2020, broadbent_parallelizing_2009, bravyi_future_2022, liu_qucloud_2021,niu_how_2021, niu_enabling_2022,das_case_2019}. Parallelisation can either be performed across different subsets of qubits on one device, or else different subsets of qubits on multiple devices. One recent application of this has been in accelerating the noisy intermediate-scale (NISQ) \cite{preskill_quantum_2018} algorithm known as the variational quantum eigensolver (VQE) on a superconducting quantum device \cite{mineh_accelerating_2022}. 
The decrease in algorithm run-time is generally linear in the number of instances of the circuit run in parallel. 
While parallelisation of quantum algorithms has become a popular recourse when a large number of noisy qubits are available on current devices, this will likely only become more widespread
as systems increase in size. 
Rather than using parallelisation to achieve a speed-up in quantum algorithm run-time, we instead propose its application to reduce the complexity of errors affecting the output of a noisy circuit.

In this work we present quantum circuit parallelisation methods to reduce the complexity of errors affecting the output distribution of a given computation. The first approach involves symmetrising the errors, and the second randomising them. Either constant, linear, or exponential reductions in the complexity of errors are derived in each case, depending on the method used and the assumptions made about the errors. A schematic depicting how parallelisation can be used for error simplification is shown in Fig. \ref{figschem}. Sampling from the parallel quantum circuits and combining the computational outputs of each results in an average of the different error channels affecting the combined output distribution. 
We refer to this averaged error channel, induced by the combining of parallel circuit outputs, as the effective error channel. These parallelisation methods generate less complex effective error channels relative to the individual error channels of the parallel circuits. 
Three examples of useful applications for simplifying errors are provided, along with experiments run on superconducting quantum hardware. These applications are: reducing the sampling overhead of measurement error mitigation, increasing the effectiveness of noise-estimation circuit mitigation, and improving the predictability of noisy circuit performance.

The paper is structured as follows. Preliminary information is given in section \ref{avg}. In section \ref{sing}, we present methods by which quantum circuit parallelisation can be applied to reduce error complexity. 
The initial analysis is of single error channels. In section \ref{mult}, results are generalised to multiple channels. 
Finally, in section \ref{app} we present some applications of simplifying errors along with experiments run on hardware and numerical simulation.

\section{\label{avg} Preliminaries}

In this work, noise is assumed to be in the form of stochastic Pauli channels. This assumption is reasonable because compilation techniques like randomized compiling can be used to ensure this is the case \cite{wallman_noise_2016, hashim_randomized_2021, ville_leveraging_2021, ville_leveraging_2022, ferracin_accrediting_2019}. 
Indeed, since they are fully compatible, we envision the practical implementation of the presented methods on quantum hardware being in combination with randomized compiling. 
The randomized compiling technique involves the random insertion of Pauli gates in the input circuit, such that general error channels are transformed into stochastic Pauli noise while the overall logic of the computation remains unchanged. This reduces the Pauli-Transfer Matrix describing a general error channel acting on a set of $n$ qubits from $4^n \times 4^n$ terms to a diagonal matrix with $4^n$ terms due to the suppression of the off-diagonal terms \cite{wallman_noise_2016}. 
 An $n$-qubit stochastic Pauli channel $\mathcal{E}^P$ acting on the state $\rho$ may be written
\begin{equation}
\begin{split}
\mathcal{E}^P(\rho)&=\sum_{P\in \mathbf{P}^{\otimes n}} c_{P}P \rho P^{\dagger}.\\
\end{split}
\end{equation}
The set of Pauli coefficients represent the probability distribution over the set of $n$-qubit Pauli operators $\mathbf{P}^{\otimes n}$ for the error channel. In general, to fully characterise a general stochastic Pauli channel it is necessary to estimate $4^n$ Pauli operator coefficients, denoted $\{c_{P}\}_{P\in \mathbf{P}^{\otimes n}}$. For a stochastic Pauli channel applied immediately after the preparation of state $\rho$, the expectation value of measuring the resulting noisy state according to the operator $O$ is
\begin{equation}
\begin{split}
    O_{\rho,1}&=\text{Tr}\bigg(\sum_{P\in \mathbf{P}^{\otimes n}} c_{P}P \rho P^{\dagger}O\bigg) \\
    &= \text{Tr}(\mathcal{E}^P (\rho) O).\\
\end{split}
\end{equation}
And if this state is prepared and measured on two difference sets of qubits with different error channels $\mathcal{E}^P$ and $\mathcal{E}'^P$, then the combined expectation value is
\begin{equation}
\begin{split}
    O_{\rho,2}&=\frac{1}{2}\bigg(\text{Tr}\bigg(\sum_{P\in \mathbf{P}^{\otimes n}} c_{P}P \rho P^{\dagger}O\bigg) \\
    &\hspace{4em}+ \text{Tr}\bigg(\sum_{P '\in \mathbf{P}^{\otimes n}} c_{P'}P ' \rho P'^{\dagger}O\bigg)\bigg)\\
    &= \frac{1}{2}(\text{Tr}(\mathcal{E}^P (\rho) O) + \text{Tr}(\mathcal{E}'^P (\rho) O))\\
    &= \frac{1}{2}\text{Tr}((\mathcal{E}^P(\rho) +\mathcal{E}'^P (\rho)) O)\\
    &= \frac{1}{2}\text{Tr}((\mathcal{E}^P +\mathcal{E}'^P) (\rho) O)\\
    &= \text{Tr}(\mathcal{E}^{\text{eff}}_2 (\rho) O).\\
\end{split}
\end{equation}

Hence combining the measured outputs of the two states provides the same output as instead sampling from a single state which has been acted on by an effective error channel, $\mathcal{E}^{\text{eff}}_2$, that is the equally weighted average of the two error channels, i.e. $\mathcal{E}^{\text{eff}}_2=2^{-1}(\mathcal{E}^P +\mathcal{E}'^P)$. A probabilistic mixture of error channels of this kind is referred to as an \textit{effective error channel}. 
This can be generalised to combining the measured outputs of $N$ noisy parallel circuits, so that

\begin{equation}
\begin{split}
    O_{\rho,N}&=\frac{1}{N}\sum^N_i\text{Tr}\bigg(\sum_{P_i\in \mathbf{P}^{\otimes n}} c_{P,i}P_i \rho P_i^{\dagger}O\bigg) \\
    &= \text{Tr}\bigg(\bigg(\frac{1}{N}\sum^N_i\mathcal{E}^P_i\bigg)(\rho) O\bigg)\\
    &= \text{Tr}(\mathcal{E}^{\text{eff}}_N(\rho) O).\\
\end{split}
\end{equation}
The resulting effective error channel is the average of the $N$ error channels from the parallel circuits. In combining the outputs of the different parallel circuits to create an averaged distribution, the output state, $\rho_{\text{out}}^{\text{eff}}$, effectively being sampled is 
\begin{equation}
\begin{split}
    \rho_{\text{out}}^{\text{eff}} &= \frac{1}{N}\sum^N_i \sum_{P_i\in \mathbf{P}^{\otimes n}} c_{P,i}P_i \rho P_i^{\dagger}\\
    &=\frac{1}{N}\sum^N_i\mathcal{E}^P_i(\rho) \\
    &=\mathcal{E}^{\text{eff}}_N(\rho) .\\
\end{split}
\end{equation}
The methods we propose apply this averaging of error channels to achieve reductions, relative to the original error channels, in the total number of coefficients required to fully describe the effective error channel. This is because in the effective channel, certain subsets of Pauli operators have the same associated Pauli coefficient, and so the channel may be described using fewer coefficients.

\begin{defn}
The error complexity of an $n$ qubit stochastic Pauli channel is the cardinality of the set of Pauli operator coefficients, denoted $|\{\textbf{c}_P\}_{P\in \mathbf{P}^{\otimes n}\setminus \text{I}^{\otimes n}}|$, which is the number of distinct Pauli coefficients needed to fully describe the channel.
\end{defn}

According to this definition of error complexity, coefficients for different stochastic Pauli channel operators that are identical are counted as a single coefficient. For example, a global depolarizing channel, $D$, can be defined by a single coefficient, so the error complexity of this channel is $|\{\textbf{c}_D\}|=1$. And a general stochastic Pauli channel, $P$, has complexity $|\{\textbf{c}_P\}|=4^n-1$. 
Error complexity reduction refers to the process through which effective stochastic Pauli channels with fewer distinct coefficients are generated.
The error complexity reduction methods that follow apply parallelisation to symmetrise or randomise errors in order to create less complex effective error channels. 
For the symmetry reductions, coefficients of operators that can be mapped to each other by a given symmetry transformation become the same in the effective error channel. For the randomisation reductions, it is instead coefficients of operators which act non-trivially, that is excluding the identity operator, on the same subset of qubits that become the same in the effective channel. 

\begin{figure}
\includegraphics[width=1\linewidth]{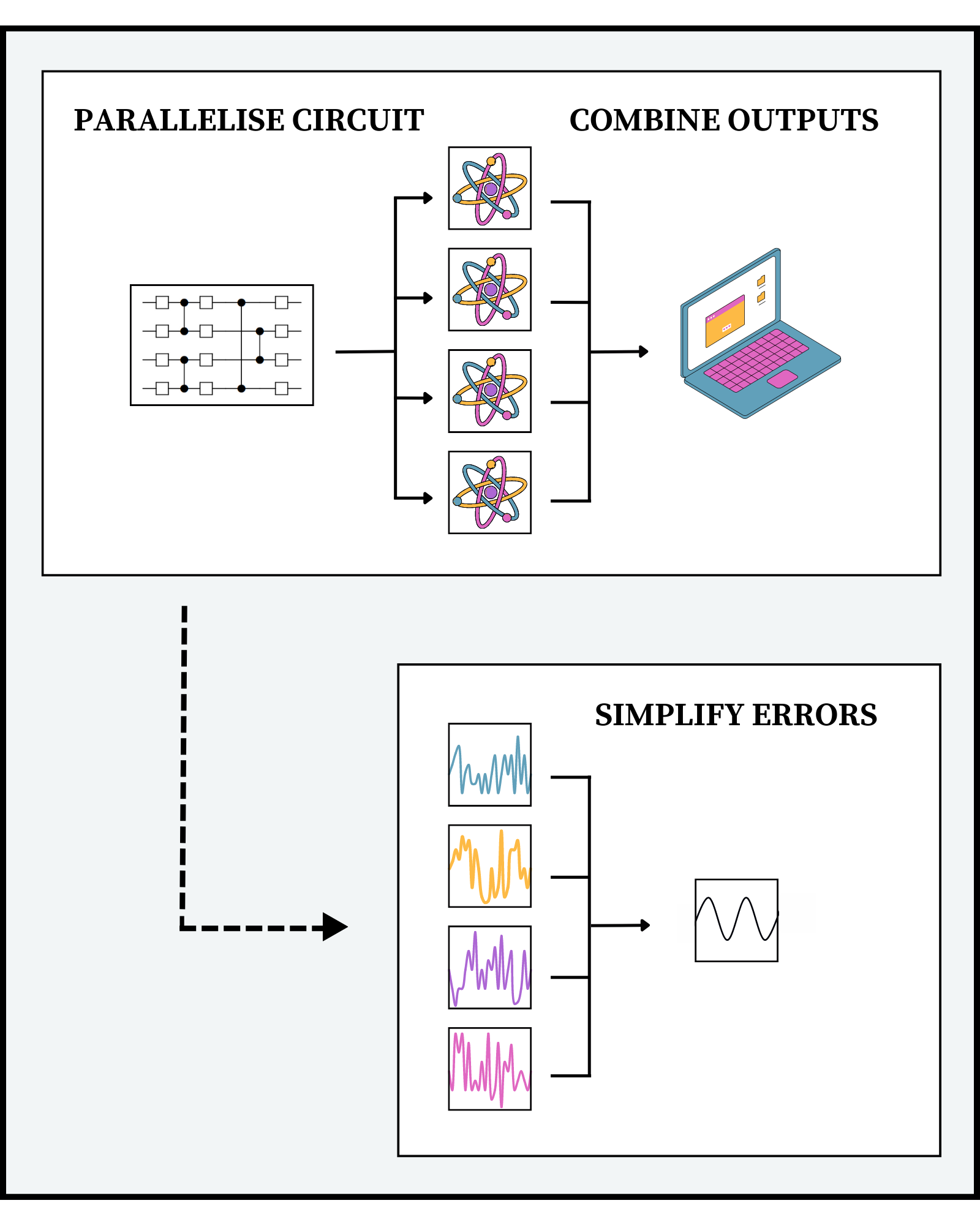}
\caption{\textit{Simplifying errors schematic.} An input quantum circuit is run in parallel on multiple sets of noisy qubits. The computational outputs of the parallel circuits are then combined such that the effects of the errors are simplified. This reduction in error complexity results from the symmetrisation and randomisation of the noisy parallel circuit errors.}
\label{figschem}
\end{figure}

\section{\label{sing} Parallelisation for a single error channel}

\begin{figure}
\scalebox{1.1}{
{\renewcommand{\arraystretch}{1.2}
\begin{tabular}{ l  c }
    \hline\hline
    Reduction type  & Effective error complexity \\ \hline
    No reduction   & $\mathcal{O}(4^n)$  \\ 
    Refl.   & $\mathcal{O}(4^n/2)$  \\ 
    Rot.      & $\mathcal{O}(4^n /n)$  \\  
    Refl. + Rot.   & $\mathcal{O}(4^n /2n)$ \\ 
    Perm.      & $\mathcal{O}(n^3/6)$ \\ 
    Rand. 1   & $\mathcal{O}(1)$ \\ 
    Rand. 2   & $\mathcal{O}(2^n)$  \\ 
    Rand. 2 + Refl. & $\mathcal{O}(2^n/2)$  \\ 
    Rand. 2 + Rot. & $\mathcal{O}(2^n/n)$ \\ 
    Rand. 2 + Refl. + Rot. & $\mathcal{O}(2^n/2n)$  \\ 
    Rand. 2 + Perm. & $\mathcal{O}(n)$  \\ \hline\hline
\end{tabular}
}
}
\begin{center}      
\caption{\textit{Effective error channel complexity summary.} The complexity of the effective error channel for each type of error complexity reduction described in Section II. For symmetry reductions, the maximum possible reductions are given without taking into account possible hardware constraints. The table entries are: no reduction, reflection symmetry, rotation symmetry, reflection and rotation symmetry, permutation symmetry, randomisation with error model (r,1), randomisation with error model (r,2),  randomisation with error model (r,2) and reflection symmetry, randomisation with error model (r,2) and rotation symmetry, randomisation with error model (r,2) with reflection and rotation symmetry, randomisation with error model (r,2) and permutation symmetry. 
}
\label{fig6}
\end{center}
\end{figure}

In this section, the circuits are assumed to have a single error channel acting at the end of the circuit, and to have a structure of alternating layers of arbitrary single qubit gates and layers of one type of symmetric multiqubit Clifford gate, like, for example, CZ gates. Circuits of this type are displayed in Fig. \ref{fig1} (a) and (b). 
Symmetry error complexity reductions are given for reflection symmetry, rotation symmetry, reflection and rotation symmetry, and permutation symmetry. Randomisation reductions are given for two different error models. Then each of the symmetry reductions is combined with randomisation for additional reductions in error complexity.

\subsection{Symmetry}

For the symmetry error complexity reductions, it is assumed that local gate errors are gate independent, and that multiqubit gate errors are gate dependent. This means that if circuit structure is preserved under different symmetry transformations applied to the qubit mappings, then the error channel remains the same. 
Here we use the term `parallelisation' to mean circuits run on the same set of qubits at different times, and these circuits are referred to as parallel circuits. The difference between parallel circuits solely being the different qubit assignments on the device topology. 
Circuit mappings are chosen in such a way that symmetries are created in the effective error channel. 
There are two types of symmetry to be considered in symmetry parallelisation error complexity reduction, \emph{device topology symmetry} and \emph{circuit symmetry}. The symmetry of the device topology, i.e. the connectivity graph for the set of qubits, determines the number of ways the nodes of the topology can be shifted around according to a particular symmetry transformation without changing the graph structure. This provides an upper bound on any possible symmetry error complexity reduction. 

Circuit symmetry relates to whether the circuit structure, the positioning of the single and multiqubit gates on the device topology disregarding gate type, is invariant under the relevant symmetry transformation. As noise is assumed to be single qubit gate independent and multiqubit gate dependent, if the positioning of single qubit gates is changed but that of multiqubit gates is the same, the error channel remains the same. 
Symmetry parallelisation with a symmetry group $S$ involves $|S|$ different versions of the error channel, which form a closed group under the given symmetry transformation. The effective channel generated by parallelising according to this symmetry acts on the state $\rho$ as 
\begin{equation}
\begin{split}
    \mathcal{E}^{\text{eff}}(\rho) &=\frac{1}{|S|}\sum_{s\in S}\mathcal{E}^s(\rho).\\
\end{split}
\end{equation}
Which is the equally weighted combination of symmetry transformed versions of the original error channel. 

Although an upper limit for the symmetry reductions is dictated by device topology, the use of these symmetries is ultimately restricted by the circuit structures it is possible to create using the native multiqubit gates available in current devices. For the noise channel to remain the same for the different circuit mappings onto the qubits, the overall circuit structure needs to be invariant under the applied symmetry transformation. For example, shown in Figure \ref{fig2} (a) is a circuit structure that is invariant under reflection symmetry. And shown in Figure \ref{fig2} (b) is a circuit invariant under rotation transformations by an even number of qubits.
For each symmetry reduction, the smallest possible multiqubit gate is used that still allows for the creation of symmetry transformation invariant circuits. For the initial symmetries only a symmetric two qubit gate acting only between pairs of qubits without overlap is needed. For permutation symmetry, a symmetric multiqubit gate acting between all qubits in the entangling layer is required. The reductions in error complexity are proportional to the number of non-trivial symmetry transformations allowed by the device topology and the circuit structure, such that the error channel remains the same.

\begin{figure}
\centering
\begin{minipage}{0.13\textwidth}
  \centering
  \includegraphics[width=0.4\linewidth]{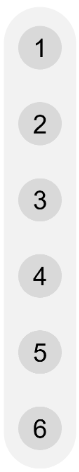}
  \begin{align*}
  \text{(a)}
  \end{align*}
  \label{fig:sub1}
\end{minipage}%
\text{\hspace{0.8cm}}
\begin{minipage}{0.1265\textwidth}
  \centering
  
  \includegraphics[width=0.4\linewidth]{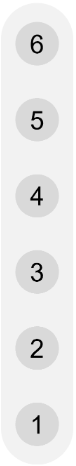}
  \begin{align*}
  \text{(b)}
  \end{align*}
  \label{fig:sub2}
\end{minipage}
\vspace{-0.5em}
\caption{\textit{Reflection symmetry.} The qubit indexing for a circuit mapping onto a set of six qubits with (a) the original qubit indexing, and (b) the reflected indexing.}
\label{fig1}
\end{figure}

\subsubsection{Reflection parallelisation}

\begin{figure}
\centering
\begin{minipage}{0.4\textwidth}
  \text{\hspace{0em}}\vspace{-1.3em}\includegraphics[width=0.9\linewidth]{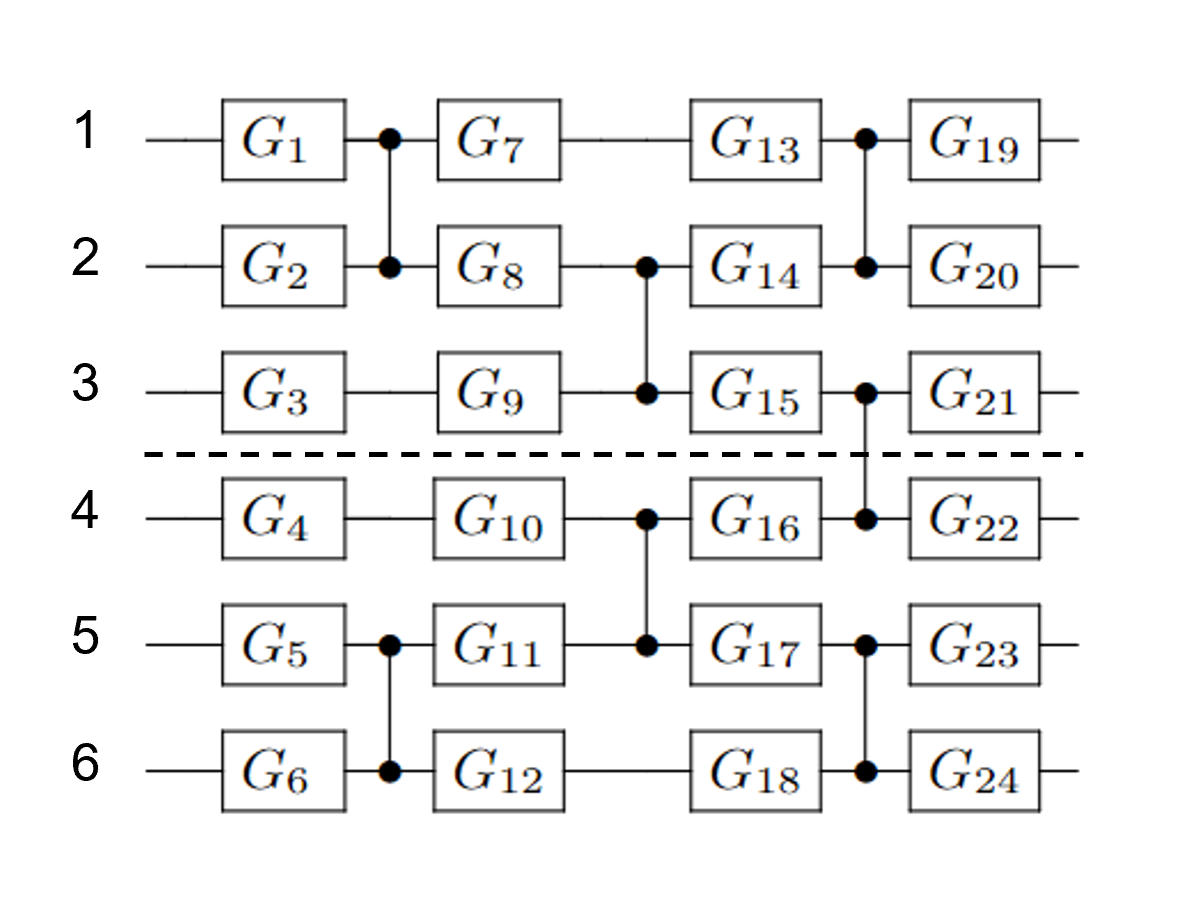}
   \begin{align*}
   \text{(a)}
   \end{align*}
  \label{fig:sub1}
\end{minipage}%
\text{\hspace{1em}}
\newline
\vspace{0.2cm}
\begin{minipage}{0.4\textwidth}
 \text{\hspace{-3.4em}} \vspace{-1.3em}\includegraphics[width=0.9\linewidth]{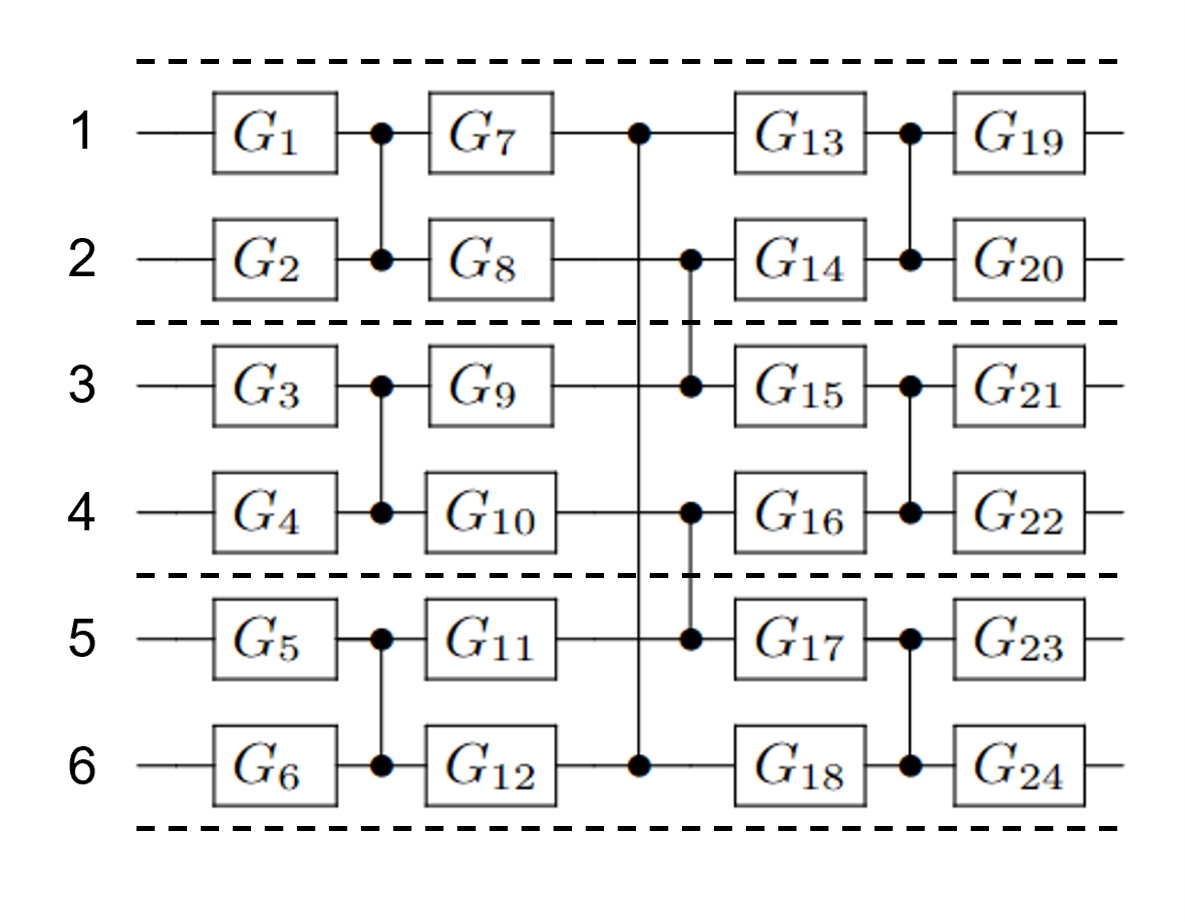}
   \begin{align*}
   \hspace{-2em}\text{(b)}
   \end{align*}
  \label{fig:sub2}
\end{minipage}
\vspace{-1em}
\caption{Example circuit diagrams with six qubit circuits with CZ gates and single qubit gates labelled $\{G_i\}_{i=1,\ldots,24}$, and with planes of circuit symmetry marked using dashed lines, for (a) reflection symmetry, and (b) rotation symmetry.}
\label{fig2}
\end{figure}

For a reflection symmetry reduction to occur, the input circuit must be reflection-symmetric. This is true if, after circuit reflection, the positions of the two qubit gates mapped onto the device topology remain unchanged, specific to the type of two qubit gate, and the positions of the single qubit gates are also unchanged, unspecific to gate type. A reflection symmetry transformation denotes the action of reversing the ordering of the qubits used for the circuit mapping. Circuits are reflection symmetry parallelised by first running the circuit and measuring the output for an initial qubit mapping, and then rerunning the circuit with the reversed mapping.
This qubit mapping is shown in Figure \ref{fig1} for a six qubit circuit. With the original qubit labelling used in Figure \ref{fig1} (a), and the reflected labelling used in Figure \ref{fig1} (b). An example of a circuit with reflection symmetry is shown in Figure \ref{fig2} (a).

If an $n$ qubit circuit is run using a linearly connected $n$ qubit subset of a quantum device, where the qubits are labelled by the $n$-tuple $(1,\text{ } 2,\text{ } \ldots,\text{ } n-1,\text{ } n)$, the reflected circuit is run using the reversed qubit ordering given by the $n$-tuple $(n,\text{ } n-1,\text{ } \ldots,\text{ } 2,\text{ } 1)$. So that for the reflection symmetry transformation, the qubit labelling is permuted according to
\begin{equation*}
\sigma=\left(\begin{array}{@{}*{20}{c@{}}}
1 & 2 & \ldots{} \text{ }&\text{ } n-1 \text{ }& \text{ }n \\
n \text{ }& \text{ }n-1 &\text{ } \ldots{} & 2 & 1   \\
\end{array}\right).
\end{equation*}

If the two instances of the circuit mapped onto the device are sampled according to an observable, and their output distributions combined with equal weight, the averaged empirical output distribution now has an associated effective error channel that is reflection symmetric. The reflection operation on the original error channel may be written in terms of its effect on the Pauli coefficients. In the original error channel $c_P = c_{P_1  P_2\ldots P_{n-1} P_n}$ denotes the coefficient of the $n$ qubit Pauli operators $P_1 \otimes P_2\otimes\ldots\otimes P_{n-1}\otimes P_n$. After the reflection operation, the coefficient for this operator in the reflected error channel becomes

\begin{equation}
\begin{split}
c_{\sigma(P)} &= c_{\sigma(P_1  P_2\ldots P_{n-1} P_n)} \\
&= c_{P_n  P_{n-1}\ldots P_{2} P_1}.\\
\end{split}
\end{equation}
Each operator coefficient in the reflected error channel is equal to the coefficient for the reflection-symmetric operator in the original channel. And so the reflected stochastic Pauli channel may be written
\begin{equation}
\begin{split}
\mathcal{E}^R(\rho)&=\sum_{P'\in \mathbf{P}^{\otimes n}} c_{\sigma(P')}P' \rho P'^{\dagger}.\\
\end{split}
\end{equation}
And the effective error channel then acts as
\begin{equation}
\begin{split}
    \mathcal{E}^{\text{eff}}(\rho)&=\frac{1}{2}(\mathcal{E} +  \mathcal{E}^{R})(\rho)\\
    &=\frac{1}{2}\bigg(\sum_{P\in \mathbf{P}^{\otimes n}} c_{P}P \rho P^{\dagger}+ \sum_{P'\in \mathbf{P}^{\otimes n}} c_{\sigma(P')}P' \rho P'^{\dagger}\bigg)\\
    &=\frac{1}{2}\sum_{P\in \mathbf{P}^{\otimes n}} (c_{P}+c_{\sigma(P)}) P \rho P^{\dagger}.\\
\end{split}
\end{equation}

Averaging the original Pauli channel with its reflection gives a constant reduction in the number of coefficients needed to describe the averaged error channel. 

\begin{theorem}

Reflection parallelisation results in a constant error complexity reduction, as the number of Pauli coefficients needed to describe the effective stochastic Pauli channel is

\begin{equation}
\begin{split}
    \normalfont|\{\textbf{c}^{ref}\}| &= 2^{2n-1} + 2^{n-1}.\\
\end{split}
\end{equation}

\end{theorem}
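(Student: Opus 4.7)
The plan is to reduce the count to an orbit-counting problem and then apply Burnside's lemma. First I would invoke the last display of the preceding derivation, which rewrites the effective channel as
\[
\mathcal{E}^{\text{eff}}(\rho)=\frac{1}{2}\sum_{P\in\mathbf{P}^{\otimes n}}(c_{P}+c_{\sigma(P)})\,P\rho P^{\dagger},
\]
so two Paulis $P,P'$ carry the same effective coefficient if and only if $\{P,\sigma(P)\}=\{P',\sigma(P')\}$, i.e.\ if and only if they lie in the same orbit of $\langle\sigma\rangle$ acting on $\mathbf{P}^{\otimes n}$. The effective error complexity therefore equals the number of such orbits.

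Next, since $\sigma^{2}=\mathrm{id}$, the acting group has order two and every orbit has size $1$ or $2$. Burnside's lemma then gives the orbit count as $\tfrac{1}{2}\bigl(|\mathbf{P}^{\otimes n}|+|\mathrm{Fix}(\sigma)|\bigr)=\tfrac{1}{2}(4^{n}+F)$, where $F$ denotes the number of reflection-symmetric Paulis. Equivalently, one can argue directly by noting that there are $F$ singleton orbits and $(4^{n}-F)/2$ doubleton orbits, summing to the same total.

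The remaining task is to evaluate $F$. A Pauli string $P_{1}P_{2}\cdots P_{n}$ is fixed by $\sigma$ iff $P_{i}=P_{n+1-i}$ for all $i$, i.e.\ iff it is a palindrome; such a string is determined by its first $\lceil n/2\rceil$ tensor factors, so $F=4^{\lceil n/2\rceil}$. For even $n$ this yields $F=2^{n}$ and hence
\[
|\{\textbf{c}^{ref}\}|=\frac{4^{n}+2^{n}}{2}=2^{2n-1}+2^{n-1},
\]
as claimed. I do not anticipate any real technical obstacle here; the argument is a routine orbit count. The one subtlety worth flagging is that $F$ depends on the parity of $n$: for odd $n$ the central factor is free, giving $F=2^{n+1}$ and a count of $2^{2n-1}+2^{n}$, so the stated closed form implicitly treats the even-$n$ case, which is the natural setting for reflection on a linear register of paired qubits.
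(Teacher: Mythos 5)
Your argument is correct and is essentially the same as the paper's: the paper counts $4^{n/2}$ reflection-invariant (palindromic) Pauli strings as singleton contributions and pairs up the remaining $4^{n}-4^{n/2}$ operators, giving $\tfrac{1}{2}(4^{n}-4^{n/2})+4^{n/2}=2^{2n-1}+2^{n-1}$, which is exactly your ``direct'' singleton/doubleton count; the Burnside phrasing is just a repackaging of that. Your remark that the closed form implicitly assumes even $n$ (since $F=4^{\lceil n/2\rceil}$ differs for odd $n$) is a fair observation the paper leaves tacit, but it does not change the substance of the proof.
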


\begin{proof}

The coefficients of the reflection-symmetric effective error channel are composed of the average of the coefficients of Pauli operators that share reflection symmetry. And so the effective channel coefficients are all of the form: $c_{P'}=\frac{1}{2}(c_{P}+c_{\sigma(P)})$, where $P\in \mathbf{P}^{\otimes n}$ and $P'\in \{P,\sigma(P)\}$. Any Pauli operator $P$ that has reflection symmetry with itself, so-called reflection symmetry invariance where $\sigma(P)=P$, is the same in the effective error channel as it is in the original channel. There are $4^{n/2}$ reflection symmetry invariant $n$ qubit Pauli operators. 
Pauli operators that are mapped onto each other by the reflection transformation are reflection-symmetric operators. These share a single Pauli coefficient in the effective error channel, which is the average of the two coefficients. The number of operators that share reflection symmetry with a different operator is $4^{n}-4^{n/2}$. Therefore the total number of distinct Pauli coefficients in the effective error channel is

\begin{equation}
\begin{split}
\frac{1}{2}(4^{n}-4^{n/2}) + 4^{n/2} &= 2^{2n-1} + 2^{n-1}.\\
\end{split}
\end{equation}

\end{proof} 

The previous expression for the total number of distinct coefficients in the effective error channel may also be expressed in terms of the contributions from different Pauli operators for each of the different subsets of qubits. In this alternative form the expression becomes

\begin{equation}
\begin{split}
|\{\textbf{c}^{ref}\}| &= \frac{1}{2}\bigg(\sum_{i=0}^n {n \choose i}3^i - \sum_{j=0}^{n/2} {n/2 \choose j}3^j\bigg) + \sum_{k=0}^{n/2} {n/2 \choose k}3^k\\
&=\frac{1}{2}\bigg(\sum_{i=0}^n {n \choose i}3^i + \sum_{j=0}^{n/2} {n/2 \choose j}3^j\bigg). \\
\end{split}
\end{equation}

Where the combination terms denote different patterns of qubits from the subset, and the power of 3 terms the number of different Pauli operators that act non-trivially on that subset of qubits. 

\subsubsection{Rotation parallelisation}

\begin{figure*}
    \centering
    \begin{minipage}{0.25\textwidth}
        \centering
        \includegraphics[width=0.9\textwidth]{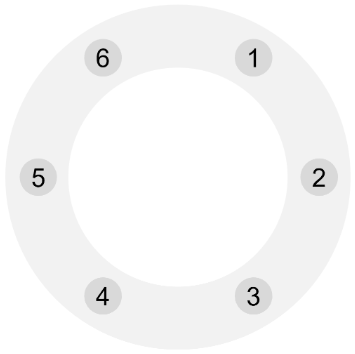} 
        \begin{align*}
        \text{(a)}    
        \end{align*}
    \end{minipage}
    \text{\hspace{0.5cm}}
    \begin{minipage}{0.25\textwidth}
        \centering
        \includegraphics[width=0.9\textwidth]{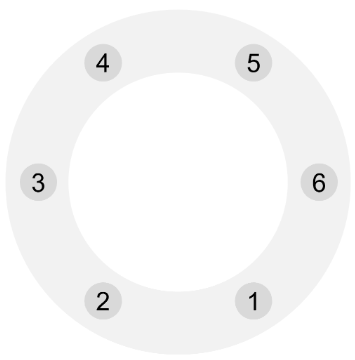} 
        \begin{align*}
        \text{(b)}    
        \end{align*}
    \end{minipage}
    \text{\hspace{0.5cm}}
    \begin{minipage}{0.25\textwidth}
        \centering
        \includegraphics[width=0.9\textwidth]{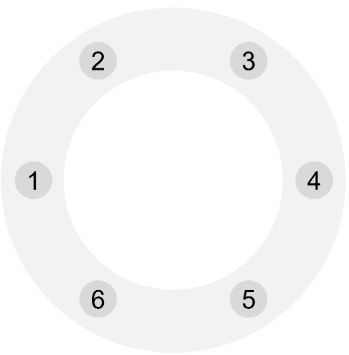} 
        \begin{align*}
        \text{(c)}    
        \end{align*}
    \end{minipage}
\caption{\textit{Rotation symmetry.} The qubit indexing for a circuit mapping onto a closed loop subset of six qubits, with (a) the original indexing, (b) the first rotation symmetry indexing, and (c) the second rotation symmetry indexing.}
\label{fig3}
\end{figure*}

A similar reduction to that found using reflection symmetry can be realised instead by applying rotation symmetry for this purpose. A locally connected loop of qubits forms a regular $n$-sided polygon, or $n$-gon, on the device topology, with vertices being qubit locations and edges the connections between the qubits. If such a shape has $n$ edges then it has order $n$ rotational symmetry. The input circuit is parallelised using mappings onto such a closed loop of qubits, this creates rotation symmetries in the resulting effective error channel. Examples of rotation parallelisation mapping for a six qubit circuit are shown in Figure \ref{fig3}. We assume simultaneous application of two qubit gates on overlapping pairs of qubits is not allowed. This limitation still allows rotation symmetry reductions but the reduction is $n/2$ rather than $n$. It is required that adjacent pairs of qubits are connected by the same symmetric two qubit gate, such that shifting the circuit around the loop by even numbers of qubits does not change the pairs of qubits between which two qubits gates are applied on the topology. Given that the positioning of the two qubit gates is the same, mapping the circuit onto the topology according to these rotation symmetry transformations does not change the error channel. 

For the different rotation symmetry mappings onto the device topology, an $n$ qubit circuit is run on a closed loop of $n$ qubits with the circuit configurations denoted by the set of $n$-tuples $(2k + 1 \text{ mod } n+1,\text{ } 2k + 2 \text{ mod } n+1,\text{ } \ldots, \text{ }2k+n-1 \text{ mod } n+1,\text{ } 2k+n \text{ mod } n+1)$ for $k\in\{0,\text{ }1,\text{ }\ldots,\text{ }\frac{n}{2}-1\}$. There are $\frac{n}{2}$ different rotation symmetry mappings of the circuit onto the loop of qubits, with the circuit configurations for each of these given by the different $n$-tuples. So that the qubit labelling is permuted according to
\begin{equation*}
\pi=\left(\begin{array}{@{}*{20}{c@{}}}
1 & 2 & \ldots{} \text{ }&\text{ } n-1 \text{ }& \text{ }n \\
n-1 \text{ }& \text{ }n &\text{ } \ldots{} & n-3 & n-2   \\
\end{array}\right).
\end{equation*}
And the same stochastic Pauli channel acts on the loop of qubits regardless of how the circuits are configured due to the two qubit gate positions remaining unchanged for all mappings. The effective error channel then acts on $\rho$ as
\begin{equation}
\begin{split}
    \mathcal{E}^{\text{eff}}(\rho) &=\frac{2}{n}\sum_{P\in \mathbf{P}^{\otimes n}}\sum_{i=0}^{\frac{n}{2}-1} c_{\pi^i(P)} P \rho P^{\dagger},\\
\end{split}
\end{equation}
and there is a linear reduction in the error complexity relative to the original error channels.

\begin{theorem}
Rotation parallelisation results in a linear error complexity reduction, as the number of Pauli coefficients needed to describe the effective stochastic Pauli channel is

\begin{equation}
\begin{split}
\normalfont|\{\textbf{c}^{rot}\}| &= \frac{2^{2n+1}-32}{n}+16\\
\end{split}
\end{equation}

\end{theorem}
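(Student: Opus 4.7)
The plan is to recognize that the effective channel averages over the cyclic group $G=\langle\pi\rangle$ of order $n/2$ (generated by the shift by two qubits), so that two Pauli coefficients $c_P$ and $c_{P'}$ collapse into a single coefficient in $\mathcal{E}^{\text{eff}}$ exactly when $P$ and $P'$ lie in the same $G$-orbit inside $\mathbf{P}^{\otimes n}$. Therefore $|\{\mathbf{c}^{rot}\}|$ equals the number of $G$-orbits on the $4^n$ elements of $\mathbf{P}^{\otimes n}$, and the task reduces to a combinatorial orbit count, entirely analogous in spirit to the proof of Theorem 1 but with a cyclic symmetry group of order $n/2$ rather than a reflection group of order $2$.

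To carry out the count I would partition $\mathbf{P}^{\otimes n}$ by stabilizer type. An operator $P=P_1 P_2\ldots P_n$ is fixed by $\pi^k$ iff the string $(P_1,\ldots,P_n)$ has period dividing $\gcd(2k,n)$. The simplest class consists of operators fixed by \emph{every} non-trivial rotation, i.e.\ those of period $2$; there are exactly $4^2=16$ of them, and each sits in an orbit of size $1$. The remaining $4^n-16$ operators I would argue all lie in free orbits, of the generic size $|G|=n/2$. Combining the two contributions yields
\begin{equation*}
|\{\mathbf{c}^{rot}\}| \;=\; 16 \;+\; \frac{4^n-16}{n/2} \;=\; 16 \;+\; \frac{2^{2n+1}-32}{n},
\end{equation*}
which is the stated expression. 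Equivalently, one can arrive at the same answer via the orbit-counting lemma applied to $G$, writing $|\text{orbits}|=(2/n)\sum_{k=0}^{n/2-1} 4^{\gcd(2k,n)}$ and collapsing the non-identity terms.

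The main obstacle is justifying that the generic orbit size really is $n/2$, i.e.\ that no Pauli string has a stabilizer strictly between the trivial subgroup and the whole of $G$. This is automatic in the regime where $n/2$ admits no non-trivial proper divisor relevant to the qubit-string period (for example whenever $n/2$ is prime), and it is exactly this simplifying hypothesis that makes the closed-form expression clean; I would state it explicitly at the start of the proof so that the singleton-vs.-free dichotomy covers all of $\mathbf{P}^{\otimes n}$. Once that point is granted, the remainder is a direct substitution, so the bulk of the writing is in identifying and counting the $16$ period-$2$ operators and confirming that their complement is partitioned into orbits of uniform size $n/2$.
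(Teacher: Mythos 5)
Your proposal is essentially the paper's own argument: both identify the same $16$ rotation-invariant Pauli strings (the period-$2$ strings $G^{\otimes n/2}$ with $G\in\{I,X,Y,Z\}^{\otimes 2}$, which the paper enumerates as $9+1+2\cdot 3$) as singleton orbits and divide the remaining $4^n-16$ operators into orbits of size $n/2$, giving $16+\tfrac{2(4^n-16)}{n}$. The one place you go beyond the paper is in flagging that the uniform orbit size $n/2$ requires that no string have a stabilizer strictly between the trivial subgroup and all of $\langle\pi\rangle$; that caveat is genuine, the paper's proof silently assumes it, and the stated formula actually fails when $n/2$ is composite — for $n=8$ the string $(X\otimes Y\otimes Z\otimes I)^{\otimes 2}$ has period $4$ and orbit size $2$, and the orbit-counting lemma gives $16456$ distinct coefficients rather than the $16396$ the formula predicts. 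So your suggestion to impose an explicit hypothesis (e.g.\ $n/2$ prime), or else to fall back on the general Burnside expression $\tfrac{2}{n}\sum_{k=0}^{n/2-1}4^{\gcd(2k,n)}$, is exactly what is needed to make the theorem correct as written.
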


\begin{proof} 

A rotation symmetric $n$ qubit circuit with non-overlapping symmetric two qubit gates in its entangling gate layers has $\frac{n}{2}$ planes of symmetry. The circuit is mapped onto the qubits according to each of these symmetries and the outputs combined. The resulting effective error channel has coefficients of the form: $c_{P'}=\frac{2}{n}\sum_{i=0}^{\frac{n}{2}-1} c_{\pi^i(P)}$, where $P\in \mathbf{P}^{\otimes n}$ and $P'\in \{\pi^i(P)\}_i$. Pauli operators that are rotation symmetry transformation invariant, and are left unchanged by rotating through an even number of qubits around the loop so that $\pi^i(P)=P$ for all $i\in\{0,1,\ldots,\frac{2}{n}-1\}$, can only be mapped to themselves and so do not contribute to the reduction. Operators that change under rotation can be mapped to $\frac{n}{2}$ other operators. There are three subsets of qubits on the loop that are rotation symmetry transformation invariant, the subset of all $n$ qubits and two subsets of $\frac{n}{2}$ qubits. The two subsets of $\frac{n}{2}$ qubits are the subset of even number labelled qubits on the device topology and the subset of odd number labelled qubits. There are $3^2$ rotationally invariant Pauli operators which act non-trivially, that is excluding the identity operator, on the $n$ qubit subset. This is because the set of Pauli operators that act non-trivially on two qubits is $\mathbf{G}= \{X,Y,Z\}^{\otimes 2}$, and the set of rotationally invariant operators is then $G^{\otimes \frac{n}{2}}$ for $G \in \mathbf{G}$. The cardinality of the set $\mathbf{G}$ is $3^2$.
The $n$ qubit identity operator is also invariant under rotation. And for each of the two $n/2$ qubit subsets consisting of alternating qubits on the loop there are three rotation invariant Pauli operators - one for each type of local Pauli operator. These are given by the two sets of operators $\{(P\otimes X)^{\otimes \frac{n}{2}}\}_{P \in \{X,Y,Z\}}$ and $\{(I\otimes P)^{\otimes \frac{n}{2}}\}_{P \in \{X,Y,Z\}}$. So the number of Pauli operators which do not remain invariant under the symmetry transformation is then $4^n - 3^2 - 1 - 2\cdot3$. This means that the number of coefficients is for the rotationally symmetric effective error channel is

\begin{equation}
\begin{split}
|\{\textbf{c}^{rot}\}| &= \frac{2}{n}(\sum_{i=0}^n {n \choose i}3^i - 3^2 - 2\cdot3^{1} - 1) \\
&+ 3^2 + 2\cdot3^{1} + 1\\
&= \frac{2^{2n+1}-32}{n}+16.\\
\end{split}
\end{equation}

\end{proof}

This result assumes that the input circuit has $n/2$ rotation symmetries, the most possible when using non-overlapping symmetric two qubit gates in the gate entangling layers. 

\subsubsection{Reflection and rotation parallelisation}

\begin{figure*}
    \centering
    \begin{minipage}{0.25\textwidth}
        \centering
        \includegraphics[width=0.9\textwidth]{loop1.png} 
        \begin{align*}
        \text{(a)}    
        \end{align*}
        \vspace{1em}
    \end{minipage}
    \text{\hspace{0.5cm}}
    \begin{minipage}{0.25\textwidth}
        \centering
        \includegraphics[width=0.9\textwidth]{loop2.png} 
        \begin{align*}
        \text{(b)}    
        \end{align*}
        \vspace{1em}
    \end{minipage}
    \text{\hspace{0.5cm}}
    \begin{minipage}{0.25\textwidth}
        \centering
        \includegraphics[width=0.9\textwidth]{loop3.png} 
        \begin{align*}
        \text{(c)}    
        \end{align*}
        \vspace{1em}
    \end{minipage}
    \begin{minipage}{0.25\textwidth}
        \centering
        \includegraphics[width=0.9\textwidth]{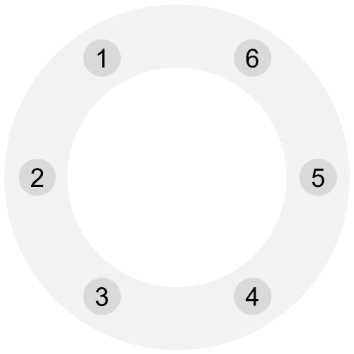} 
        \begin{align*}
        \text{(d)}    
        \end{align*}
    \end{minipage}
    \text{\hspace{0.5cm}}
    \begin{minipage}{0.25\textwidth}
        \centering
        \includegraphics[width=0.9\textwidth]{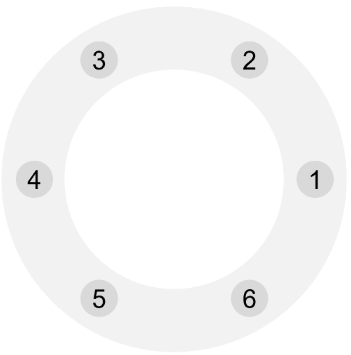} 
        \begin{align*}
        \text{(e)}    
        \end{align*}
    \end{minipage}
    \text{\hspace{0.5cm}}
    \begin{minipage}{0.25\textwidth}
        \centering
        \includegraphics[width=0.9\textwidth]{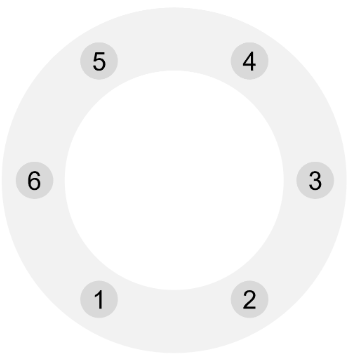} 
        \begin{align*}
        \text{(f)}    
        \end{align*}
    \end{minipage}
\caption{\textit{Reflection and rotation symmetry.} The qubit indexing for a circuit mapping onto a closed loop of six qubits, with (a) the original indexing, (b) the first rotation symmetry indexing, (c) the second rotation symmetry indexing, (d) the first reflection and rotation symmetry indexing, (e) the second reflection and rotation symmetry indexing, and (f) the third reflection and rotation symmetry indexing.}
\label{fig4}
\end{figure*}

It is also possible to combine the reflection and rotation symmetry reductions. For $n$-gon device topologies, the group of symmetries is defined by the dihedral group. The dihedral group for an $n$-gon has order 2$n$, this includes both rotation and reflection symmetries. A reflection and rotation symmetric circuit, like the circuit in Fig. \ref{fig2} (b), mapped onto a loop of qubits can exploit both both types of symmetry for error complexity reduction. This ensures the mappings for these two types of symmetry onto a closed loop of qubits on the device topology do not change the overall error channel. Fig. \ref{fig4} shows the different mappings possible for rotation and reflection symmetry on a closed loop of six qubits. Reflection and rotation symmetric mapping onto this closed loop means including all of the rotation symmetries for the original clockwise circuit orientation shown in Fig. \ref{fig4} (a)-(c), then reflecting the circuit and including the rotations for the anticlockwise orientation shown in Fig. \ref{fig4} (d)-(f).

It is again assumed that non-overlapping symmetric two qubit gates are used in the entangling gate layer. For the different instances of combined reflection and rotation parallelisation, an $n$ qubit circuit is run on a closed $n$ qubit subset loop with the circuit configurations denoted by the set of $n$-tuples $(2k + (-1)^{j}\cdot 1 \text{ mod } n+1,\text{ } 2k + (-1)^{j}\cdot2 \text{ mod } n+1,\text{ } \ldots,\text{ } 2k+(-1)^{j}\cdot(n-1) \text{ mod } n+1,\text{ } 2k+(-1)^{j}\cdot(n) \text{ mod } n+1)$ for $k\in\{0,\text{ }1,\text{ }\ldots,\text{ }\frac{n}{2}-1\}$ and $j=\{0,\text{ }1\}$. For each circuit orientation there are $\frac{n}{2}$ different possible mappings given by the different $n$-tuples. The qubit labelling is permutated according to $\pi^i$ and $\pi^i \circ \sigma$ for $i\in\{0,1,\ldots,\frac{n}{2}-1\}$. For all circuit mappings the same stochastic Pauli channel acts on the loop of qubits on the topology regardless of how the circuits are configured. As the two qubit gate positions remain unchanged by the reflection and the rotation symmetric mappings. The effective error channel is then
\begin{equation}
\begin{split}
    \mathcal{E}^{\text{eff}}(\rho) &=\frac{1}{n}\sum_{P\in \mathbf{P}^{\otimes n}} \bigg(\sum_{i=0}^{\frac{n}{2}-1} c_{\pi^i(P)} + \sum_{i=0}^{\frac{n}{2}-1} c_{\pi^i \circ \sigma(P)} \bigg) P \rho P^{\dagger}.\\
\end{split}
\end{equation}

\begin{theorem}
Reflection and rotation parallelisation results in a linear error complexity reduction, as the number of Pauli coefficients needed to describe the effective stochastic Pauli channel is 

\begin{equation}
\begin{split}
\normalfont|\{\textbf{c}^{ref,rot}\}| = \frac{1}{n}(2^{2n}+2^n +10n -20),
\end{split}
\end{equation}
where $n=4k$, $k\geq2$ and $k\in \mathbb{Z}$.
\end{theorem}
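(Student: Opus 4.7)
The plan is to interpret the effective-channel coefficient count as the number of orbits of the Pauli set $\mathbf{P}^{\otimes n}$ under the dihedral group $D_{n/2}$ of order $n$ generated by the rotation $\pi$ (of order $n/2$) and the reflection $\sigma$. Since the effective-channel coefficient of a Pauli $P$ is the average of the coefficients of its $D_{n/2}$-images, two Pauli operators share a coefficient exactly when they lie in the same orbit.

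Following the style of the proofs of Theorems 1 and 2, I would partition $\mathbf{P}^{\otimes n}$ by stabiliser. Fully $D_{n/2}$-invariant operators must satisfy both the rotation constraint $P_j = P_{j+2 \bmod n}$ (forcing a common Pauli on odd-indexed qubits and another on even-indexed qubits) and the reflection constraint $P_j = P_{n+1-j}$; together these collapse to a single Pauli on every qubit, contributing a small constant number of singleton orbits. The operators fixed by the order-two rotation $\pi^{n/4}$, which satisfy $P_j = P_{j+n/2}$ and number $4^{n/2} = 2^n$, together with their reflection analogues, contribute orbits of intermediate size responsible for the $2^n$ and $10n - 20$ terms. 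All remaining operators have trivial stabiliser and form orbits of size $n$, producing the leading $2^{2n}/n$ contribution. Assembling the weighted counts and simplifying arithmetically yields the stated $\frac{1}{n}(2^{2n}+2^n+10n-20)$.

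The hypothesis $n = 4k$ with $k \geq 2$ plays two roles: $n/2$ being even makes $\pi^{n/4}$ a well-defined order-two rotation, providing the canonical ``halfway'' subgroup whose invariants account for the $2^n$ term, and $k \geq 2$ (so $n \geq 8$) rules out small-$n$ degeneracies where the subgroup lattice of $D_{n/2}$ collapses. The main obstacle will be enumerating the intermediate-invariance classes without double-counting, since $D_{n/2}$ has many proper subgroups---cyclic subgroups of the rotation part, and dihedral subgroups containing various reflected rotations $\pi^i\sigma$. Showing that only a short list of these actually supports Pauli fixed points (using the fact that an operator fixed by a subgroup must be periodic with period dictated by the generator's shift) is the crux; once that list is in hand, the closed-form expression follows from the same type of combinatorial aggregation used in the proofs of Theorems 1 and 2.
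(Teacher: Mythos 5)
Your orbit-counting framework is the correct formalisation of the problem and is, in spirit, a more careful version of what the paper does. But as written this is a plan rather than a proof: the crux you yourself identify --- classifying the operators with intermediate stabilisers --- is deferred, and your attribution of the $2^n$ and $10n-20$ terms is not where they actually come from. The paper's proof simply reuses the counts from Theorems 1 and 2: it starts from the reflection-reduced count $2^{2n-1}+2^{n-1}$ (so the $2^n$ in the numerator traces to the $4^{n/2}$ reflection-fixed operators, not to operators fixed by $\pi^{n/4}$), identifies $10$ fully invariant coefficient classes ($1$ for the identity, $6$ for the operators acting nontrivially on all $n$ qubits after merging the asymmetric pairs such as $XY$ with $YX$, and $3$ for the two alternating $n/2$-qubit subsets, which reflection merges into one family), and computes $\tfrac{2}{n}\bigl(2^{2n-1}+2^{n-1}-10\bigr)+10$. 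That is, every non-invariant reflection-class is assumed to lie in a free rotation orbit of size $n/2$; there is no stabiliser analysis beyond the fully invariant case.

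The concrete obstacle for your route is that if you carry it out exactly (e.g.\ via Burnside's lemma for the order-$n$ dihedral group generated by $\pi$ and $\sigma$), you will not arrive at the stated formula. The expression $\tfrac{1}{n}(2^{2n}+2^n+10n-20)$ is not even an integer for $n=8$ (the numerator is congruent to $4$ modulo $8$), so it cannot equal an exact orbit count, and your final step ``assembling the weighted counts \ldots yields the stated expression'' cannot be completed as described. The intermediate-stabiliser orbits you propose to enumerate (operators fixed by $\pi^{n/4}$, by other proper rotations, or by individual reflected rotations $\pi^i\sigma$) contribute corrections that the stated formula does not account for. To match the theorem you would have to adopt the paper's cruder accounting --- treat the $10$ invariant classes exactly and everything else as free --- and accept that this is a leading-order approximation rather than an exact count; a genuinely rigorous version of your approach yields a different closed form.
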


\begin{proof}
The result follows from combining the coefficient terms from theorems 1 and 2, giving 
\begin{equation}
\begin{split}
|\{\textbf{c}^{ref,rot}\}| &= \frac{2}{n}(2^{n-1}(2^{n}+1) - 6 - 3 - 1)\\
&+ 6 + 3 + 1\\
&= \frac{2}{n}(2^{n-1}(2^{n}+1) - 10) + 10\\
\end{split}
\end{equation}
which is then rearranged. The effective error channel resulting from the parallelisation has Pauli coefficients of the form: $c_{P'}=\sum_{i=0}^{\frac{n}{2}-1} c_{\pi^i(P)} + \sum_{i=0}^{\frac{n}{2}-1} c_{\pi^i \circ \sigma(P)}$, where $P\in \mathbf{P}^{\otimes n}$ and $P'\in \{\pi^i(P)\}_i\cup \{\pi^i \circ \sigma(P)\}_i$. The condition $k\geq2$ ensures that there the reflection and rotation reductions do not overlap and so may be considered independently. 
No reduction occurs for the coefficients of Pauli operators that are rotation and reflection invariant, that is the operators for which it is true that $\pi^i \circ\sigma(P)=P$ for all $i\in\{0,1,\ldots,\frac{2}{n}-1\}$. 
That is the 0 qubit subset for which there is 1 coefficient. The $n$ qubit subset, for which there are 6 rather than 9 invariant operators because the asymmetric two qubit operator pairs XY, YX, XZ, ZX, YZ and ZY after reflection are reduced to just XY, YZ and ZX. \textit{This notation refers to operators applied on all $\frac{n}{2}$ pairs of qubits on the loop, for example XZ denotes $(X\otimes Z)^{\otimes \frac{n}{2}}$}. So that the remaining symmetry transformation invariant operators acting over all pairs of qubits are XX, YY, ZZ, XY, YZ and ZX. The $n$/2 qubit subset has 3 coefficients, one for each type of local Pauli operator. As for each of these there is a single type of Pauli operator acting on the first qubit and identity acting on the second qubit for all qubit pairs in the $n$ qubit loop.

\end{proof}

\subsubsection{Permutation parallelisation}

If a device topology is a complete graph then all possible permutations of qubit assignments can be used for circuit mapping. For permutation parallelisation symmetry, the set of qubit labellings is given by the symmetry group Sym$(n)$. An example of a six qubit complete graph topology is shown in Fig. \ref{fig5}. The practical difficulty with implementing permutation parallelisation is in finding circuits with permutation symmetry. This is not possible for the hardware-friendly non-overlapping symmetric two qubit gate layers previously used. For permutation symmetric circuits an entangling gate layer that is invariant under the permutation symmetry transformations is required. 
It is assumed that we can apply a single type of symmetric two qubit gates, like the CZ gate, simultaneously between all interqubit connections on the complete graph topology. 
If this is the only multi-qubit gate layer used in the circuit, the circuit structure mapped onto the topology will be invariant under any permutation of qubit indexing. 
Likewise, for controlled quantum gates that have multiple control qubits, if this is the only multiqubit gate acting on these qubits in the circuit, then these control qubits would be permutationally symmetric. 
The permutation parallelisation effective error channel acts on $\rho$ as
\begin{equation}
\begin{split}
    \mathcal{E}^{\text{eff}}(\rho) &=\frac{1}{|\text{Sym}(P)|}\sum_{P\in \mathbf{P}^{\otimes n}}\sum_{S \in \text{Sym}(P)} c_{S(P)} P \rho P^{\dagger}.\\
\end{split}
\end{equation}

\begin{theorem}
Permutation parallisation results in an exponential error complexity reduction, as the number of Pauli coefficients needed to describe the effective stochastic Pauli error channel is
\begin{equation}
\begin{split}
\normalfont|\{\textbf{c}^{per}\}| 
&= \frac{1}{6}(n+1)(n+2)(n+3) \\
\end{split}
\end{equation}

\end{theorem}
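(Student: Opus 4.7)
\begin{proofs}[Proof Proposal]
The plan is to interpret the effective channel as an orbit average under the Sym$(n)$ action on $n$-qubit Pauli strings, and then reduce the counting of distinct coefficients to a standard multiset enumeration.

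First, I would note that because the assumed entangling layer is invariant under every permutation of the qubit labels, the same stochastic Pauli channel arises from all $n!$ circuit mappings. Averaging as in the effective-channel expression preceding the theorem therefore identifies the coefficients of two Pauli operators $P = P_1 \otimes \cdots \otimes P_n$ and $P' = P'_1 \otimes \cdots \otimes P'_n$ precisely when there is some $\pi \in \text{Sym}(n)$ with $P'_i = P_{\pi(i)}$ for every $i$. Equivalently, $P$ and $P'$ share a coefficient in $\mathcal{E}^{\text{eff}}$ if and only if they have the same multiset of local factors drawn from $\{I, X, Y, Z\}$; no further identifications are created, since distinct multiplicity profiles cannot be related by any permutation.

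Second, I would count these orbits by their multiplicity vectors. Each orbit corresponds uniquely to a tuple $(n_I, n_X, n_Y, n_Z)$ of nonnegative integers satisfying $n_I + n_X + n_Y + n_Z = n$, and a stars-and-bars argument gives the number of such tuples as $\binom{n+3}{3}$. Expanding then yields $\tfrac{1}{6}(n+1)(n+2)(n+3)$, matching the claim.

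There is really no hard step here; the only subtlety is verifying that the orbit structure on Pauli strings exactly matches the multiset description, which is immediate from the transitivity of Sym$(n)$ on the rearrangements of any fixed tuple. If desired, the same count can be re-derived by applying Burnside's lemma to the action of Sym$(n)$ on $\{I, X, Y, Z\}^n$ as a consistency check, but the multiset identification makes this unnecessary.
\end{proofs}
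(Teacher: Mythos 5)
Your proposal is correct and follows essentially the same route as the paper: identifying the distinct coefficients of the effective channel with Sym$(n)$-orbits of Pauli strings, recognising these as multisets over $\{I,X,Y,Z\}$, and counting them as $\binom{n+4-1}{n}=\binom{n+3}{3}=\tfrac{1}{6}(n+1)(n+2)(n+3)$. The only difference is presentational — you phrase the count via stars-and-bars on the multiplicity vector $(n_I,n_X,n_Y,n_Z)$ where the paper cites the multiset coefficient directly — which is the same argument.
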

\begin{proof} 
For a permutation-symmetric $n$ qubit circuit, all possible permutations of qubit labelling for the circuit mappings are used, denoted by the symmetric group of the qubit labellings Sym($n$). The circuit is mapped onto the qubits using each of the different mappings and sampled from, and their outputs combined with equal weight. 
All qubit orderings, or positioning of the local Pauli operators within each $n$ qubit Pauli operator, in the error channel are equivalent under permutation symmetry. Pauli operators which have the same numbers of each type of Pauli operator share a single coefficient, which is the average of the coefficients of these operators. The permutation-symmetric effective channel coefficients are all of the form: $c_{P'}=\frac{1}{|\text{Sym}(P)|}\sum_{S \in \text{Sym}(P)} c_{S(P)}$, where $P\in \mathbf{P}^{\otimes n}$ and $P'\in\text{Sym}(P)$. The number of distinct coefficients is equal to the number of $n$ element multisets of local Pauli operators, where each multiset has a different number of each type of local Pauli operator. And as there are four types of operator and n qubits, the number of coefficients for the channel is $n+4-1 \choose n$. Which is then expanded and rearranged for the result. 
\end{proof}

The main drawback of scaling up the symmetry reduction approach generally is that the reduction in error complexity is proportional to the number of different circuit mappings required to generate the symmetric effective channel. Meaning that for a polynomial reduction in complexity a polynomial number of different parallel circuits are required, and for an exponential reduction, exponentially many. However, if some randomly chosen subset of the total number of symmetry mappings is used then the effective error channel will approximate the symmetric effective error channel. With the closeness of the approximation determined by the size of the subset. Permutation parallelisation involves a super-exponentially scaling number of circuit mappings. If some randomly selected subset of the permutation symmetry group is used for parallelisation then the effective error channel will converge on the permutation-symmetric channel in terms of the size of the subset of symmetry circuit mappings used. This follows from applying Hoeffding's inequality to the convergence of the coefficients of the effective error channel with increasing symmetry mapping subset size \cite{hoeffding1963probability}. As each Pauli coefficient of the effective error channel, $c_P$, will converge on the corresponding coefficient of the permutation-symmetric channel, $c_{S(P)}$, as the number of permutations of the error channel included in the effective error channel is increased. To show this very similar reasoning can be applied as is used in the next section to bound the convergence of the randomised effective error channel.

One example of multiqubit gates with high permutational symmetry are controlled gates which have multiple control qubits. The local gate $G$ conditionally applied on the $n^{\text{th}}$ qubit depending on $n-1$ control qubits is denoted $(C)^{n-1}G$. One example of this type of gate for three qubits is the quantum CCNOT or Toffoli gate. For a gate of this kind one can permute the wires of the control qubits without changing the action of the gate. This could be used to produce a permutation parallelisation error complexity reduction for the part of the error channel acting on the $n-1$ control qubits. 

\begin{figure}
  \text{\hspace{4.7em}}\includegraphics[width=0.8\linewidth]{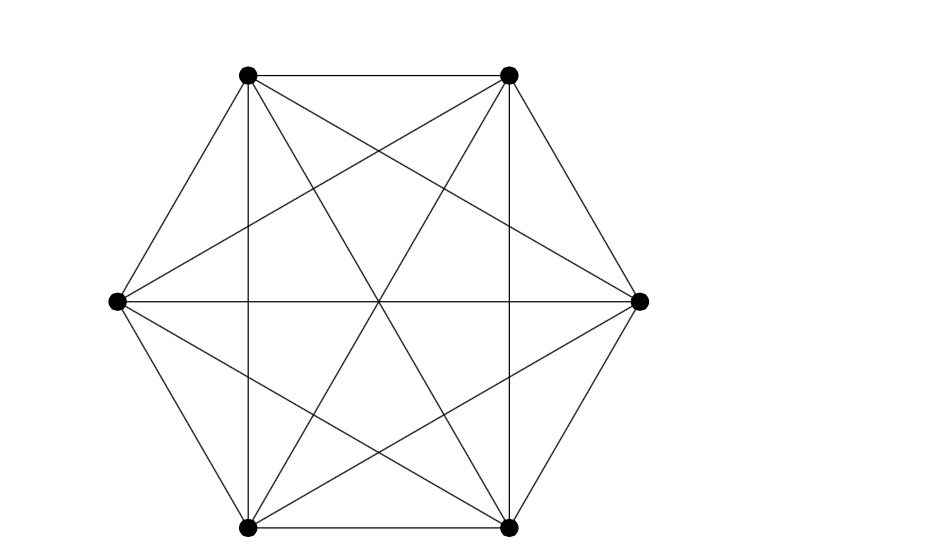}
\caption{\textit{Permutation symmetry.} A six qubit device topology consisting of a complete graph, where the vertices are qubits and the edges are interqubit connections. If all of the entangling gate layers of a circuit act symmetrically between all of the interqubit connections, the Pauli channel remains invariant under permutation symmetry transformations of the circuit.}
\label{fig5}
\end{figure}

\subsection{Randomisation \label{rand_sect}}

Now, instead of error symmetrisation, it is shown that randomisation can be applied to reduce error complexity. Through parallelising circuits across subsets of qubits with different stochastic Pauli channels, a randomised effective channel of lower relative complexity can be created. For a single qubit circuit parallelised across two qubits, two separate single qubit stochastic Pauli channels $\mathcal{E}_1$ and $\mathcal{E}_2$ act on two copies of the state $\rho$. If one combines the output state from the two different channels acting on $\rho$ with equal weight, then the two stochastic Pauli channels are averaged giving an effective error channel of

\begin{equation}
\begin{split}
\mathcal{E}(\rho)&= 2^{-1}(\mathcal{E}_1(\rho)+\mathcal{E}_2(\rho)) \\
&= 2^{-1}(c_{1,I}\rho + c_{1,X} X\rho X +c_{1,Y} Y\rho Y + c_{1,Z} Z \rho Z \\
&+ c_{2,I}\rho + c_{2,X} X\rho X +c_{2,Y} Y\rho Y + c_{2,Z} Z \rho Z)\\
&= 2^{-1}((c_{1,I}+c_{2,I})\rho + (c_{1,X}+c_{2,X}) X\rho X \\
&+ (c_{1,Y}+c_{2,Y}) Y\rho Y + (c_{1,Z}+c_{2,Z}) Z \\ 
&= c'_{I}\rho + c'_{X} X\rho X +c'_{Y} Y\rho Y + c'_{Z} Z \rho Z\\
&= \mathcal{E}'(\rho).
\end{split}
\end{equation}
This new effective Pauli channel $\mathcal{E}'$ has Pauli coefficients $\{c'_{I},c'_{X},c'_{Y},c'_{Z}\}$ which constitute the average of the coefficients for each of the respective Pauli operators from the two channels. This can be generalised for $N$ parallel circuits. Rather than parallelising using two single qubit circuits, $N$ single qubit circuits are used. Each of the circuits has a separate stochastic Pauli channel, $\{\{c_{i,I}, c_{i,X},c_{i,Y},c_{i,Z}\}\}_{i=1,\ldots,N}$, describing an independent probability distribution over the Pauli operators. The averaged error channel from the $N$ circuit parallelisation is 

\begin{equation}
\begin{split}
\mathcal{E}(\rho)&= N^{-1}\sum_{i=1,...N} \mathcal{E}_N(\rho) \\
&= N^{-1}(\sum_{i=1,...N}c_{i,I}\rho + \sum_{i=1,...N}c_{i,X} X\rho X \\
&+ \sum_{i=1,...N}c_{i,Y} Y\rho Y + \sum_{i=1,...N}c_{i,Z} Z)\\ 
&= c'_{I}\rho + c'_{X} X\rho X +c'_{Y} Y\rho Y + c'_{Z} Z \rho Z. \\
\end{split}
\end{equation}
Which can then be straightforwardly generalised to circuits of $n$ qubits and $n$ qubit stochastic Pauli channels. Making certain assumptions on the distributions of coefficients for the Pauli operators across the different parallel circuits, error randomisation can be used to show convergence on a simpler effective error channel. 
If the errors are time-dependent, the randomisation parallelisation could be performed using only a single set of qubits, with the circuit parallelisation performed in time rather than in space.

\subsubsection{Randomisation parallelisation using `nice' error model}

We assume an error model, referred to as (r,1), where the operator coefficients for the stochastic Pauli channels of the different parallel circuits are drawn from the same distribution. Parallelisation randomises the coefficients of the effective error channel, resulting in convergence on a single value as the number of parallel instances of the circuit is increased. In other words, the effective error channel converges on a global depolarizing channel. The probability distribution of coefficients over $N$ parallel circuits for a given Pauli operator $P$, where $P\in \mathbf{P}^{\otimes n}\backslash I$, is denoted $\{c_{i,P}\}_{i=1,\ldots,N}$. 
It is assumed that for the different parallel circuits the coefficients take values according to $c_{i,P} \sim \mathcal{D}(\eta, [0,1])$. Where $\mathcal{D}(\eta, [0,1])$ is any continuous distribution which takes values on the closed interval $[0,1]$ with mean $\eta$. This means that $\{c_{i,P}\}_i$ is a set of independent identically distributed random variables with a mean of $\eta$, and support on the bounded interval [0,1]. Importantly, the overall distribution of the Pauli coefficients, including the identity coefficient, for the error channel of each parallel circuit must still be normalised. I.e. $\sum_P c_{i,P}=1$.

\begin{theorem}
Randomisation parallelisation with the (r,1) error model results in an effective stochastic Pauli channel which is convergent upon a channel with 
\begin{equation}
\normalfont|\textbf{c}_{r,1}|=1.
\end{equation}
Pauli coefficients, 
 which is a global depolarizing channel with depolarizing parameter $\lambda=\frac{\eta(|\{\mathbf{P}^{\otimes n}\}|-1)}{|\{\mathbf{P}^{\otimes n}\}|}$.
\end{theorem}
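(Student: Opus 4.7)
The plan is to reduce the statement to a law-of-large-numbers style concentration argument for bounded i.i.d.\ random variables, exactly along the lines that the paper previews at the end of the previous subsection. First I would fix a non-identity Pauli operator $P \in \mathbf{P}^{\otimes n}\setminus I$ and observe that, by the derivation on the preceding page, the effective coefficient after averaging $N$ parallel circuits is $c'_P = \tfrac{1}{N}\sum_{i=1}^{N} c_{i,P}$, where $\{c_{i,P}\}_i$ are i.i.d.\ samples from $\mathcal{D}(\eta,[0,1])$. Since each summand is bounded in $[0,1]$ with mean $\eta$, Hoeffding's inequality gives
\begin{equation*}
\Pr\bigl(|c'_P-\eta|\geq \epsilon\bigr)\leq 2\exp(-2N\epsilon^{2})
\end{equation*}
for every $\epsilon>0$. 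A union bound over the $|\mathbf{P}^{\otimes n}|-1$ non-identity Pauli operators then yields
\begin{equation*}
\Pr\Bigl(\max_{P\neq I}|c'_P-\eta|\geq \epsilon\Bigr)\leq 2(|\mathbf{P}^{\otimes n}|-1)\exp(-2N\epsilon^{2}),
\end{equation*}
which vanishes exponentially in $N$. Hence the full collection of non-identity coefficients converges simultaneously in probability to the single common value $\eta$.

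Given this convergence, the error-complexity count is immediate from the definition in Section \ref{avg}: the set of distinct non-identity coefficients collapses to one element in the limit, so $|\{\mathbf{c}_{r,1}\}|=1$. The identity coefficient is then pinned down by the normalisation $\sum_{P}c'_{P}=1$, giving $c'_{I}\to 1-(|\mathbf{P}^{\otimes n}|-1)\eta$ by the continuous mapping theorem. To identify this as a global depolarising channel, I would match the limiting Pauli spectrum $\{c'_{I},\eta,\ldots,\eta\}$ against the Pauli expansion of $D_{\lambda}$ in the paper's convention and read off the stated value of $\lambda$; this is a purely algebraic substitution.

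The main subtlety, rather than a technical obstacle, is the interplay between independence and normalisation: the constraint $\sum_{P}c_{i,P}=1$ prevents treating the identity coefficient itself as an independent draw from $\mathcal{D}(\eta,[0,1])$. The clean way around this is to apply the concentration argument only to the $|\mathbf{P}^{\otimes n}|-1$ non-identity coefficients and let $c_{i,I}$ be the deterministic residual $1-\sum_{P\neq I}c_{i,P}$; convergence of $c'_{I}$ then follows by continuous mapping from the concentration of the non-identity coefficients. I would flag this convention explicitly, but no further technical difficulty is expected beyond it.
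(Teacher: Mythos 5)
Your proposal is correct and follows essentially the same route as the paper: Hoeffding's inequality applied to the i.i.d.\ bounded coefficients $\{c_{i,P}\}_i$ to show each effective coefficient concentrates at $\eta$, followed by the algebraic regrouping of the Pauli spectrum $\{c'_I,\eta,\ldots,\eta\}$ to read off $\lambda=\frac{\eta(|\{\mathbf{P}^{\otimes n}\}|-1)}{|\{\mathbf{P}^{\otimes n}\}|}$. Your union bound for simultaneous convergence and your explicit handling of the identity coefficient as the normalisation residual are minor tightenings of details the paper leaves implicit, not a different argument.
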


\begin{proof}
If a circuit is parallelised over $N$ subsets of qubits, the coefficients of the effective error channel are $\{c'_{P}\}_{P\in \mathbf{P}^{\otimes n}}=\{N^{-1}\sum_{i=1,...N}c_{i,P}\}_{P\in \mathbf{P}^{\otimes n}}$. The coefficients relating to each Pauli operator are distributed across the original error channels of the different parallel circuits according to the previously stated assumptions of the (r,1) model. Hoeffding's inequality can then be used to upper bound the convergence of the effective channel coefficients on the value $\eta$ with increasing N. For a given Pauli operator $P\in \mathbf{P}^{\otimes n}$, this gives 
\begin{equation}
\begin{split}
&\text{Pr}\bigg[\bigg(\frac{1}{N}\sum_{i=1,...N}c_{i,P} -\eta \bigg)>\epsilon \bigg]\leq 2e^{-2N\epsilon^2} .\\
\end{split}
\end{equation}
The number of parallel circuits needed for a given accuracy and precision is $N\geq \frac{1}{2 \epsilon^2}\ln\frac{2}{\delta}$, where $\delta \geq 2e^{-2N\epsilon^2}$. That is, to get an ($1-\delta$)-confidence interval $\eta \pm \epsilon$ at least $\frac{1}{2 \epsilon^2}\ln\frac{2}{\delta}$ different error channels must be averaged over in the effective error channel. 
The coefficients of the effective channel all converge on a single value, $\eta$, as a function of the number of parallel circuits, $N$. Therefore the effective Pauli channel converges on a global depolarizing channel. 
The equivalence between the randomised effective error channel, with all Pauli coefficients equal to $\eta$, and a global depolarizing channel may be seen by regrouping the terms of the effective Pauli error channel
\begin{equation}
\begin{split}
\mathcal{E}(\rho)&= (1-\eta)\rho + \frac{\eta}{|\{\mathbf{P}^{\otimes n}\}|}\sum_{P\in \mathbf{P}^{\otimes n}} P \rho P^{\dagger} \\
&= \bigg(1- \frac{\eta(|\{\mathbf{P}^{\otimes n}\}|-1)}{|\{\mathbf{P}^{\otimes n}\}|}\bigg)\rho \\
& \hspace{6em}+ \frac{\eta}{|\{\mathbf{P}^{\otimes n}\}|}\bigg(\rho+\sum_{P\in \mathbf{P}^{\otimes n}} P \rho P^{\dagger}\bigg) \\
&= \bigg(1- \frac{\eta(|\{\mathbf{P}^{\otimes n}\}|-1)}{|\{\mathbf{P}^{\otimes n}\}|}\bigg)\rho + \frac{\eta(|\{\mathbf{P}^{\otimes n}\}|-1)}{|\{\mathbf{P}^{\otimes n}\}|}\frac{I}{n}\\
&=  (1 - \lambda)\rho + \lambda\frac{I}{n}.\\
\end{split}
\end{equation}
Hence the relation between the stochastic Pauli channel coefficient, $\eta$, and the depolarizing parameter, $\lambda$, is $\lambda=\frac{\eta(|\{\mathbf{P}^{\otimes n}\}|-1)}{|\{\mathbf{P}^{\otimes n}\}|}$. 
\end{proof}

For this error model there is no differentiation between the probabilities of higher and lower weight Pauli errors, as all are identically distributed. A potentially more realistic error model could be one in which Pauli operators of different weight can have different distributions over the coefficients. For example, it might be expected that lower weight errors occur with greater probability than higher weight ones. More specifically, that the weight 1 errors, that is to say local errors or errors that can be expressed in unordered tensor product form as $P_1\otimes I^{\otimes n-1}$ with $P_1\in\{X,Y,Z\}$, occur with the greatest probabilities. The next most probable type of error being the weight 2 operators, $P_1\otimes P_2 \otimes I^{\otimes n-2}$, and so on. 
In the following section a noise model is assumed that allows for variation in coefficient distributions of the Pauli operators with weight and qubit position across the different parallel circuits.

\subsubsection{Randomisation parallelisation using physically motivated error model}

Now a physically motivated error model is assumed, referred to as (r,2), allowing for variation in Pauli operator coefficient distributions across parallel circuits with the weight of the operators. 
In this error model, the coefficient distributions for the Pauli error operators over the different parallel circuits depend only on the relative positioning of the qubits acted on non-trivially by the operators, but not on operator type. 
Consider the component of an $n$ qubit stochastic Pauli channel $\mathcal{E}^P$ acting non-trivially on a specific subset of $m$ qubits where $m \leq n$. This component of the channel acts on the $m$ qubit subset with the set of $3^m$ Pauli operators denoted $\mathbf{G}= \{X,Y,Z\}^{\otimes m}$, with the identity operator acting on all $n-\nolinebreak m$ other qubits. The $m$ qubit subset this component of the channel acts on is labelled by the $m$-tuple $q=(q_1,\ldots,q_m)$, where $0\leq q_1<\ldots<q_m \leq n$. The component of the stochastic Pauli channel acting on the $m$ qubit subset $q$ of the $i^{th}$ parallel circuit is denoted 
\begin{equation}
\mathcal{E}^\mathbf{G}_{i,q}(\rho)=\sum_{G\in \mathbf{G}} c^{G}_{i,q} G \rho G^{\dagger}.
\end{equation}
The set of coefficients $\{c^{G}_{i,q}\}_{G\in \mathbf{G}}$ is the probability distribution over the set of Pauli operators $\mathbf{G}$, which act non-trivially only on the set of qubits denoted by the m-tuple $q$ for the $i^{\text{th}}$ parallel circuit. 

For any subset of qubits, $q$, it is assumed that the coefficients of operators which act non-trivially on the subset across the different parallel circuits take values according to $c^{\mathbf{G}}_{i,q} \sim \mathcal{D}(\eta_q, [0,1])$. Where $\mathcal{D}(\eta, [0,1])$ is any continuous distribution which takes values on the closed interval $[0,1]$ with mean $\eta_q$, with $0 \leq \eta_q \leq \frac{1}{2}$ and $i \in {1,\ldots,N}$. 
So that the average of the coefficients over the parallel circuits for each of these operators converge on $\eta_q$ with increasing $N$. 
Leading to an effective error channel consisting of a mixture of depolarizing channels, with a different depolarizing channel acting each different subset of $n$ qubits. Randomising an $n$ qubit stochastic Pauli channels by parallelisation gives an effective error channel of a probabilistic mixture of $2^n-1$ depolarizing channels, each acting on a different subset of qubits. 

\begin{theorem}
Randomisation parallelisation with the (r,2) error model results in an effective stochastic Pauli channel which is convergent upon a channel with 
\begin{equation}
\normalfont|\{\textbf{c}_{r,2}\}|=2^n - 1
\end{equation}
Pauli coefficients. 
Each coefficient relates to a depolarizing channel acting on a subset of qubits $q_j$ for $j\in\{0,\text{ }\ldots,\text{ }2^n-2\}$. And the relation between Pauli coefficient $\eta_{q_j}$ and depolarizing parameter $\lambda_{q_j}$ is $\lambda_{q_j}=\nolinebreak\frac{\eta_{q_j}(|\{\mathbf{P}^{\otimes n}\}|-1)}{|\{\mathbf{P}^{\otimes n}\}|}$.
\end{theorem}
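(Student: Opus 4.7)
The plan is to reduce the proof to a per-subset application of the techniques used in Theorem 4. First I would partition the non-identity Pauli operators $\mathbf{P}^{\otimes n}\setminus\{I^{\otimes n}\}$ according to the unique subset of qubits on which each acts non-trivially. The $2^n$ subsets of $\{1,\ldots,n\}$ decompose into (i) the empty subset, which corresponds to the identity operator and is unaffected by the channel, and (ii) $2^n - 1$ non-empty subsets $\{q_j\}_{j=0,\ldots,2^n-2}$, each carrying the $3^{|q_j|}$ operators of $\mathbf{G}_{q_j}$. This combinatorial decomposition already gives the claimed cardinality $|\{\mathbf{c}_{r,2}\}| = 2^n - 1$, provided I can show that all coefficients within a single $\mathbf{G}_{q_j}$ block collapse to a common value in the effective channel.

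To establish that collapse, I would apply Hoeffding's inequality block-by-block, exactly as in the proof of Theorem 4. By the (r,2) assumption, for each fixed non-empty $q$ the sequence $\{c^{G}_{i,q}\}_{i=1,\ldots,N}$ is i.i.d.\ with mean $\eta_q$ and support in $[0,1]$, so for every $G \in \mathbf{G}_q$ one has
\[
\text{Pr}\!\left[\,\left|\frac{1}{N}\sum_{i=1}^N c^{G}_{i,q} - \eta_q\right| > \epsilon \,\right] \leq 2 e^{-2N\epsilon^2}.
\]
A union bound over the $4^n - 1$ non-identity operators (costing only a logarithmic penalty in the required $N$) then shows that, with high probability, every coefficient of the averaged effective channel whose operator is supported exactly on $q_j$ converges to the common value $\eta_{q_j}$.

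Finally, I would identify the resulting limiting channel as a probabilistic mixture of $2^n-1$ depolarizing channels, one per non-empty subset. Within each block $q_j$, all $3^{|q_j|}$ coefficients are equal, and applying the same regrouping of Pauli terms used at the end of the proof of Theorem 4 to the operators supported on $q_j$ yields a depolarizing channel on that subset together with the stated relation
\[
\lambda_{q_j} = \frac{\eta_{q_j}(|\{\mathbf{P}^{\otimes n}\}|-1)}{|\{\mathbf{P}^{\otimes n}\}|}.
\]
The main obstacle, such as it is, is the bookkeeping: one must check that the subsets $\{q_j\}$ index the Pauli operators disjointly, that the (r,2) assumption distinguishes the $\eta_q$ only by subset and not by operator within $\mathbf{G}_q$, and that the depolarizing identification lifts block-by-block. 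The concentration step and the algebraic identification of $\lambda_{q_j}$ are essentially direct per-subset reruns of the arguments already worked out for Theorem 4.
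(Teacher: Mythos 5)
Your proposal is correct and follows essentially the same route as the paper: partition the non-identity Pauli operators by their support subset (giving the $2^n-1$ blocks), apply Hoeffding's inequality per block under the (r,2) i.i.d.\ assumption to collapse each block's $3^{|q_j|}$ coefficients to the common value $\eta_{q_j}$, and regroup each block into a depolarizing channel exactly as in the (r,1) case to obtain the stated $\lambda_{q_j}$ relation. Your explicit union bound over the $4^n-1$ operators is a small added refinement the paper leaves implicit, but it does not change the argument.
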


\begin{proof}
The component of the effective error channel $\mathcal{E}'$ which acts non-trivially on an $m$ qubit subset $q_j$ is denoted $\mathcal{E}^{\mathbf{G}_j}_{q_j}$, where $0\leq m \leq n$, $j\in\{0,\text{ }\ldots,\text{ }2^n-2\}$, $\mathbf{G}_j= \{X,Y,Z\}^{\otimes m}$, and the channel coefficients are denoted $\{c^{G}_{q_j}\}_{G\in\mathbf{G}_j}$. For the (r,2) error model, the shared expected value for the distributions of coefficients of all of the Pauli operators which act non-trivially on the subset of qubits $q_j$ is $\eta_{q_j}$. Using Hoeffding's inequality, each coefficient of the effective error channel 
is, with high probability, $\epsilon$-close to the expected value $\eta_{q_j}$ in terms of $N$. If the number of parallel circuits, $N$, obeys the relation $N\geq \frac{1}{2 \epsilon^2}\ln\frac{2}{\delta}$. Then for the average of the set of coefficients $\{c^{G}_{i,q_j}\}_{i\in\{1,\ldots,N\}}$ relating to the Pauli operator $G$, i.e. the coefficient of $G$ in the effective error channel $c^{G}_{q_j} = \sum_{i=1}^N c^{G}_{i,q_j}/N$, it is true that $\text{Pr}[\sum_{i=1}^N (c^{G}_{i,q}/N) - \eta_{q_j} >\epsilon]\leq 2e^{-2N\epsilon^2}\leq \delta$. 
So the randomised effective error channel acting on the qubit subset $q_j$ is well approximated by
\begin{equation}
\begin{split}
\mathcal{E}^{\mathbf{G}_j}_{q_j}(\rho)&=\sum_{G\in \mathbf{G}_j} c^{G}_{q_j} G \rho G^{\dagger}\\
&\approx\sum_{G\in \mathbf{G}_j} \eta_{q_j} G \rho G^{\dagger}.\\
\end{split}
\end{equation}
With the result that the reduction in Pauli coefficients for the component of the effective error acting on this $m$ qubit subset is from $3^m$ to 1. This is true for any component of the effective error channel acting on a restricted subset of qubits. The effective error channel $\mathcal{E}^{\text{eff}}$ is then a convex combination of error channels, each of which acts on a different subset of the set of $n$ qubits. So that
\begin{equation}
\begin{split}
\mathcal{E}^{\text{eff}}(\rho)&=\sum_{j=1}^{2^n -1}\mathcal{E}^{\mathbf{G}_j}_{i,q_j}(\rho)\\
&\approx\sum_{j=1}^{2^n -1}\sum_{G\in \mathbf{G}_j} \eta_{q_j} G \rho G^{\dagger}.\\
\end{split}
\end{equation}
This effective error channel consists of a combination of $2^n -1$ depolarizing channels. Each of these has a single depolarizing parameter and so the effective channel therefore has $2^n -1$ coefficients. With the expression for the depolarizing channel parameters in terms of Pauli coefficients derived in a similar way as previously, but with a different depolarizing parameter for each subset of qubits.
\end{proof}

An illustrative example of randomisation parallelisation for a two qubit subset of a three qubit circuit is given in the Appendix \ref{appendix}.

\subsection{Symmetric Randomisation}

It is possible to combine the symmetry and randomisation error complexity reductions. Only the (r,2)
error model is considered since no further reduction is possible for the (r,1) error model after randomisation parallelisation. For the (r,2) error model, randomisation parallelisation creates an effective error channel in which coefficients of Pauli operators that act non-trivially on the same subset of qubits approach a single shared value. So that the randomised effective channel consists of a mixture of depolarizing channels, each acting on a different subset of the set of $n$ qubits. Whereas symmetry parallelisation generates an effective channel where the coefficients of Pauli operators that can be mapped onto each other by the relevant symmetry transformation are equal. Consequently, employing these two approaches together causes two different reductions to take place. 

\begin{subtheorem}   
\label{subth1}   
Reflection and randomisation parallelisation using the ($r,2$) error model results in an effective stochastic Pauli channel which is convergent upon a channel with 
\begin{equation}
\begin{split}
\normalfont|\{\textbf{c}_{r,2}^{ref}\}| &= \frac{1}{2}\bigg(\sum_{i=0}^n {n \choose i} + \sum_{j=0}^{n/2} {n/2 \choose j}\bigg)
\end{split}
\end{equation}
Pauli coefficients.
\end{subtheorem}

\begin{proof}
The reflection symmetry reduction leads to Pauli operators which can be mapped to each other by the reflection symmetry transformation sharing by a single coefficient in the effective error channel. 
So that reflection parallelisation results in a total number of coefficients given by
$$
|\{\textbf{c}^{ref}\}| =\frac{1}{2}\bigg(\sum_{i=0}^n {n \choose i}3^i + \sum_{j=0}^{n/2} {n/2 \choose j}3^j\bigg), \\
$$
Applying randomisation along with reflection parallelisation means that the coefficients of Pauli operators which act non-trivially on the same subset of qubits, and on the reflection of this subset, converge on the same value. For each subset of $i$ qubits from the set of $n$ qubits, according to Thm. 6 the distinct coefficients for operators acting non-trivially on this subset converge from $3^i$ coefficients to 1 coefficient. So that the power of 3 terms in the previous expression all converge to 1 with $N$, and the result follows. 
\end{proof}

\begin{subtheorem}   
    \label{subth2}   
Rotation and randomisation parallelisation using the ($r,2$) error model results in an effective stochastic Pauli channel which is convergent upon a channel with 
\begin{equation}
\begin{split}
\normalfont|\{\textbf{c}_{r,2}^{rot}\}| 
 &=  \frac{2}{n}\sum_{i=0}^n {n \choose i} + \frac{4n-8}{n}\\
    \end{split}
\end{equation}
Pauli coefficients.
\end{subtheorem}

\begin{proof}
The result follows in a similar way as for Sub-Thm. \ref{subth1}.
\end{proof}

\begin{subtheorem} 
\label{subth3}   
Reflection, rotation and randomisation parallelisation using the ($r,2$) error model results in an effective stochastic Pauli channel which is convergent upon a channel with 
\begin{equation}
\begin{split}
 \normalfont   |\{\textbf{c}_{r,2}^{ref,rot}\}| 
&= \frac{1}{n}\bigg(\sum_{i=0}^n {n \choose i} + \sum_{j=0}^{n/2} {n/2 \choose j}\bigg) + \frac{4n-8}{n}\\
\normalfont\end{split}
\end{equation}
Pauli coefficients.
\end{subtheorem}

\begin{proof}
The result follows in a similar way as for Sub-Thm. \ref{subth1}.
\end{proof}

\begin{subtheorem} 
\label{subth4}   
Permutation and randomisation parallelisation using the ($r,2$) error model results in an effective stochastic Pauli channel which is convergent upon a channel with 
\begin{equation}
\begin{split}
\normalfont|\{\textbf{c}_{r,2}^{per}\}| &= n+1 \\
\end{split}
\end{equation}
Pauli coefficients.
\end{subtheorem}

\begin{proof}
The result follows in a similar way as for Sub-Thm. \ref{subth1}.
\end{proof}

\section{\label{mult}  Parallelisation for multiple error channels}

Instead of considering the case of a single error channel acting on each parallel circuit, we now consider circuits with multiple error channels. 
In the following, noisy circuits are modelled as a concatenation of ideal gate layers immediately followed by their associated error channels. The state of the system after the noisy implementation of a circuit with $m$ layers of gates is then of the form

\begin{equation}
\begin{split}
    \rho_{out}&=\mathcal{E}_m U_m\ldots\mathcal{E}_2U_2\mathcal{E}_1U_1(\rho).\\
\end{split}
\end{equation}
Where $U_k$ is the $k$-th layer of ideal gates, and $\mathcal{E}_k$ the error channel associated with it. 
The symmetry reductions are not directly extendable to multiple error channels unless extra conditions are imposed on the input circuit. This is because the errors would need to preserve the relevant symmetry during propagation for the symmetry reductions to be valid. 
This is also true of the symmetric randomisation reductions. These conditions are stated and their implications briefly discussed. 
As the distributions of the Pauli coefficients for the different error channels within each circuit are assumed to be independent, the randomisation reductions extend directly to the multiple channel case. 

\subsection*{Symmetry}

To extend the symmetry reductions to circuits with multiple error channels it is required that the propagation of errors be symmetric for the results from the previous section to hold. 
For multiple error channels and without certain strict conditions on the gates, such as, for example, that all single qubit gates are local Pauli operators, the effective channel is in fact an average of error channels each acting on a different state. 
The extension of the reflection symmetry reduction to multiple error channels serves to illustrate this. The analysis for other types of symmetry is very similar. In the reflection reduction an error channel and its reflection are combined to create an effective channel of the form
\begin{equation}
\begin{split}
    \rho_{out}&=\frac{1}{2}(\mathcal{E}(\rho) +  \mathcal{E}^{R}(\rho))\\
    &=\frac{1}{2}(\mathcal{E} +  \mathcal{E}^{R})(\rho)\\
    &=\mathcal{E}^{\text{eff}}(\rho).\\
\end{split}
\end{equation}
Where $\mathcal{E}$ is the original error channel, $\mathcal{E}^{R}$ is its reflection and $\mathcal{E}^{\text{eff}}$ is the effective error channel. In the case where two error channels are instead applied this becomes
\begin{equation}
\begin{split}
    \rho_{out}&=\frac{1}{2}(\mathcal{E}_2\mathcal{E}_1(\rho) +  \mathcal{E}_2^R\mathcal{E}_1^R(\rho))\\
    &=\frac{1}{2}(\mathcal{E}'(\rho) +  \mathcal{E}'^{R}(\rho))\\
    &=\mathcal{E}^{\text{eff}}(\rho).\\
\end{split}
\end{equation}
Here it is again possible to combine the channels to create a simpler reflection symmetric effective channel. However, if arbitrary gate layers are included between each error channel, the output state instead becomes
\begin{equation}
\begin{split}
    \rho_{out}&=\mathcal{E}_2U_2\mathcal{E}_1U_1(\rho) +  \mathcal{E}_2^R U_2 \mathcal{E}_1^R U_1 (\rho)\\
    &=\mathcal{E}_2(\rho') +  \mathcal{E}_2^R(\rho'').\\
\end{split}
\end{equation}
The reflection symmetry needed for the reduction will be lost due to asymmetric error propagation, unless local gates are chosen so that the errors will propagate in such a way that the symmetry between the original and reflected error channels is preserved. If the local gates in the input circuit are all local Pauli operators, it is possible to propagate all of the Pauli error channels to the end of the circuit. With errors propagating symmetrically in the original and the reflected circuit. Due to the relation $(P_1\otimes\ldots\otimes P_n)(P_1'\otimes\ldots\otimes P_n')=(-1)^a(P_1'\otimes\ldots\otimes P_n')(P_1\otimes\ldots\otimes P_n)$ where $a=0$ if the Pauli operators commute and otherwise $a=1$. Then it is again possible to attain the reduction in the final combined error channel
\begin{equation}
\begin{split}
    \rho_{out}&=\mathcal{E}_2U_2\mathcal{E}_1U_1(\rho) +  \mathcal{E}_2^R U_2 \mathcal{E}_1^R U_1 (\rho)\\
    &=\mathcal{E}_2 \mathcal{E}'_{1} U_2U_1(\rho) +  \mathcal{E}_2^R \mathcal{E}_1'^{,R} U_2 U_1 (\rho)\\
    &=\mathcal{E}' U_2 U_1(\rho) +  \mathcal{E}'^{,R} U_2 U_1(\rho)\\
    &=\mathcal{E}^{\text{eff}}(\rho).\\
\end{split}
\end{equation}
Another alternative could instead be that local gate layers are required to be reflection symmetric. While errors would no longer be kept within the Pauli group during propagation, this would ensure an error propagation that preserves symmetry.

\subsection*{Randomisation}

Although error propagation is problematic in the extension of symmetric reductions, this issue does not feature in extending the randomisation reductions from single to multiple error channels. 
Since in the multiple channel version of randomisation parallelisation each effective channel is randomised independently, all of the effective channels converge in the same way as in the single channel case and the results hold. 
In other words, if the distributions of coefficients for the different error channels within each circuit are independent, then the error complexity reductions follow in the same way as in the single channel case. If each of the parallel circuits is sampled from with equal weight, the effective output state being sampled is
\begin{equation}
\begin{split}
    \rho_{out}&=\mathcal{E}^{\text{eff}}_m U_m\ldots\mathcal{E}^{\text{eff}}_2U_2\mathcal{E}^{\text{eff}}_1U_1(\rho).\\
\end{split}
\end{equation}
Where each error channel in the set $\{\mathcal{E}^{\text{eff}}_i\}_{i=1,\ldots,m}$ is the randomised effective error channel from the average over the error channels for each gate layer of the parallel circuits. And with the convergence behaviour of the effective error channels depending on the pertinent noise model.

\subsection*{Symmetric Randomisation}

Combining reflection symmetry parallelisation and randomisation parallelisation, the effective output state from sampling randomly with equal weight from the parallel circuits is
\begin{equation}
\begin{split}
    \rho_{out}&=\mathcal{E}^{\text{eff}}_m U_m\ldots\mathcal{E}^{\text{eff}}_2U_2\mathcal{E}^{\text{eff}}_1U_1(\rho)\\
     &\hspace{2em}+\mathcal{E}^{\text{eff},R}_m U_m\ldots\mathcal{E}^{\text{eff},R}_2U_2\mathcal{E}^{\text{eff},R}_1U_1(\rho).\\
\end{split}
\end{equation}
It is therefore the case that the same issue applies as in the instance of symmetry alone. Namely that the error channel propagation for the symmetry reductions prevents extension to multiple error channels with arbitrary gates. 
As the asymmetric error propagation prevents the coefficient matching necessary for the error complexity reduction. However, if the circuit conditions described previously regarding the extension of the symmetry reductions to multiple channels are met, then the symmetric randomisation results will also extend to multiple channels. The same arguments are also true for the other instances of symmetric randomisation.

\section{\label{app} Applications}

There are many ways these methods might be usefully applied; some examples of potential applications are briefly described in this section. The first is applying symmetry error simplification to reduce the sample cost of measurement error mitigation. The second is employing randomisation to enhance the effectiveness of noise-estimation circuit mitigation. And the final application is using randomisation to make noisy circuit performance more predictable and robust to time-dependent error.

\subsection*{Reducing sample complexity for quantum measurement error mitigation}

Measuring qubits is a large source of error in many types of quantum hardware. This has motivated the development of a wide array of measurement error mitigation techniques, designed to reduce the effects of measurement errors on the computational output through classical postprocessing \cite{chen_detector_2019,hamilton_scalable_2020,maciejewski_mitigation_2020,berg_model-free_2022}. Matrix-inversion measurement error mitigation is one of these techniques \cite{hamilton_scalable_2020,maciejewski_mitigation_2020,yang_efficient_2022}. It involves the generation of a stochastic transition matrix encoding information about the measurement noise. This matrix is experimentally derived then inverted and used to multiply the output vector of a noisy computation. It is assumed that the measurement noise can be modelled by a stochastic matrix, such that
\begin{equation}
    \label{eqnoisemap1}
    {p_{\text{noisy}}}=\mathcal{S}{p_{\text{ideal}}}.
\end{equation}
Where $p_{\text{noisy}}$ and ${p_{\text{ideal}}}$ are the output distribution vectors with and without measurement noise respectively. In the absence of further simplifying assumptions, such as restricting the weight of the correlated errors, these stochastic matrices are of size $2^n \times 2^n$. Matrix characterisation presents a significant difficulty for scaling up the technique because of the sample cost. 

\begin{figure}
  \text{\hspace{-0.9em}}\includegraphics[width=0.85\linewidth]{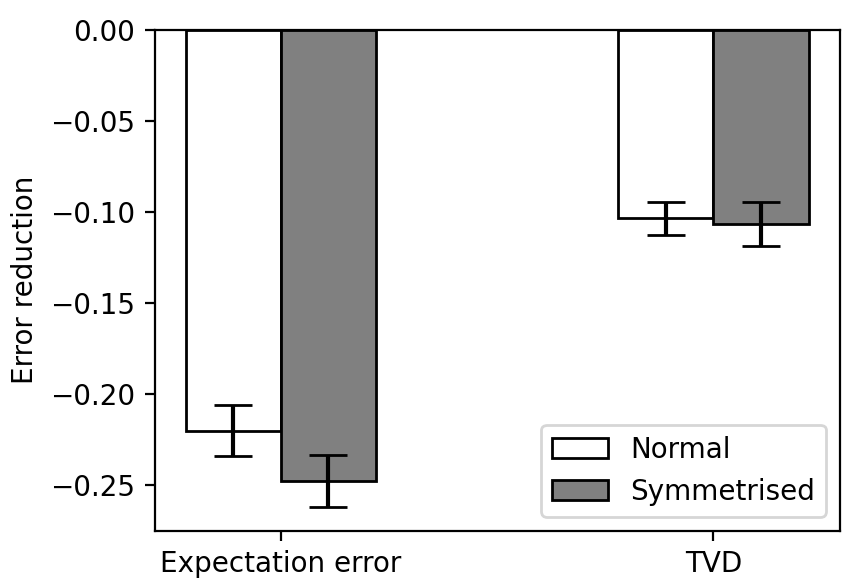}
\caption{\textit{Quantum measurement error mitigation with and without the symmetry reduction.} A plot of the relative performance of measurement error mitigation both with and without the rotation and reflection symmetry reduction, in terms of expectation value error and total variation distance. The mitigation effectiveness of the normal measurement error mitigation and the symmetrised version is seen to be very similar. For the mitigation without symmetry reduction $160000$ samples were used, as compared to the $60000$ needed for the mitigation with symmetry reduction.}
\label{fig7}
\end{figure}

To generate the entries of the stochastic measurement error matrix different computational basis states are prepared using $X(\pi)$ gates, and then are immediately measured in the computational basis. Error from imperfect measurement means that, rather than just the prepared basis state, a noisy output is obtained consisting of a distribution over the basis states. Only bit flip or Pauli X errors are registered, which can be viewed as a special case of a stochastic Pauli channel. Where for $n$ qubits the set of operators is $P_X \in\{I,X\}^{\otimes n}$ instead of $\{I,X,Y,Z\}^{\otimes n}$. The measurement error channel, $\mathcal{E}^M$, acting on the output state is then
\begin{equation}
\begin{split}
\mathcal{E}^M(\rho)&=\sum_{P_X\in \{I,X\}^{\otimes n}} c_{P_X}P_X \rho P_X^{\dagger}.\\
\end{split}
\end{equation}
The following experiments were run on the superconducting hardware to test whether symmetry parallelisation can provide a sample complexity improvement for measurement error mitigation without reducing mitigation effectiveness. Information on the hardware used is provided in the Appendix. A four qubit subset connected in a closed loop on the device topology was used in each experiment, so that there were $16$ basis states and the stochastic measurement error matrix was of size $16 \times 16$ entries. 
There are eight ways to map the circuit onto the closed four qubit loop using rotation and reflection symmetry. With the result that instead of having to characterise the output distributions for the set of 16 basis states, one need only characterise the distributions for 6 states: $\{\ket{0000}, \ket{0001},\ket{0011},\ket{0101},\ket{0111},\ket{1111}\}.\text{ }$
As for example the state $\ket{0011}$ can be mapped onto the states $\ket{0110}$, $\ket{1100}$ and $\ket{1001}$ by symmetry transformations. And so the rows in the transition matrix for these basis states contain the same information, which is the average of the noisy output distributions of the states.

In the experiment, a couple of two qubit Bell pairs were generated using the circuit shown in Figure \ref{fig9} (a). This circuit creates the state
\begin{equation}
\ket{\psi}=\frac{1}{\sqrt{2}}(\ket{00}+\ket{11})\otimes\frac{1}{\sqrt{2}}(\ket{00}+\ket{11}).
\end{equation}
Reflection and rotation parallelisation was used for the circuit mappings onto the qubits. Each mapping was used to obtain an empirical approximation of the noisy output probability distribution. 
In the first instance, mitigation is used to reduce the effects of readout errors for each of the circuit mappings. In the second instance, reflection and rotation parallelisation was used with the mitigation. The circuits were run using each of the eight different mappings, and the outputs were combined to create an average output distribution. This averaged output was then mitigated using the reflection and rotation symmetry parallelised transition matrix. In the normal measurement error mitigation, 160,000 samples were used, while 60,000 were used for the symmetry parallelised mitigation. As can be seen from the plot in Figure \ref{fig7}, approximately the same error reduction was achieved for the symmetry-aided mitigation as for the normal mitigation, but with 100,000 fewer samples. 

\subsection*{Enhancing effectiveness of noise-estimation circuit mitigation}

Error mitigation by noise-estimation circuits assumes that depolarizing noise is the dominant source of error in the input noisy circuit \cite{urbanek_mitigating_2021}. Previously it was shown that randomisation parallelisation can generate averaged noise that converges on depolarizing. In the following experiment, noise-estimation circuit mitigation is performed both with and without randomisation to test whether randomisation makes the technique's assumptions more valid, and causes an improvement in mitigation performance. In the technique, the output expectation value from sampling from an input circuit using a noisy quantum device is assumed to be of the form
\begin{equation}
O_{noisy} = \lambda O_{ideal}.
\end{equation}
Where $O_{noisy}$ and $O_{ideal}$ are the noisy and ideal expectation values respectively, and $\lambda$ is the depolarizing parameter. A simplified version of the input circuit is used to experimentally derive an approximation of the depolarizing parameter, denoted $\lambda'$. This is then used to postprocess the noisy expectation value output, inverting the action of the depolarizing noise on the expectation value derived from running the circuit on noisy hardware. If it is approximately true that
\begin{equation}
\frac{\lambda}{\lambda'} \approx 1,
\end{equation}
the mitigated expectation value should be close to the ideal, and the mitigation successful. 

In the experiments, error mitigation was first performed without the randomisation parallelisation, and subsequently with it. Four qubit circuits were used with four gate layers consisting of alternating CZ and local X rotation gate layers, with eight rotation parameters overall. The circuit structure is shown in Figure \ref{fig9} (b). The ideal circuit was randomly initialised and trained to output ideal expectation values from the set: $\{-1,-0.5,0.5,1\}$. The parameters obtained for each of these ideal expectation values were then used to run the circuit mapped onto three closed loops of the superconducting quantum device. When depolarizing factors less than $0.1$ were recorded, these were rejected as failed circuit runs. The noisy expectation value of the first qubit was mitigated both with and without randomisation parallelisation. The results are shown in the plot in Figure \ref{fig8}. Using randomisation parallelisation with the mitigation lead to an average improvement relative to the normal mitigation of $41\%$. Indicating randomisation parallelisation generated effective error channels that more closely approximated the depolarising noise assumed for the mitigation technique than was the case for the individual parallel circuits.

\begin{figure}
\text{\hspace{0em}}\includegraphics[width=0.85\linewidth]{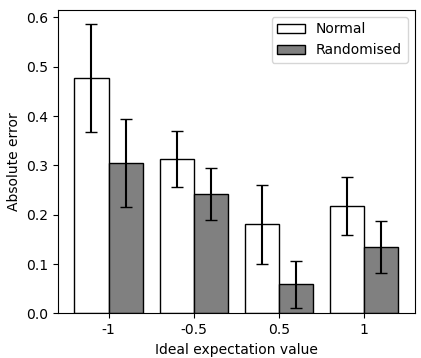}
\centering
\caption{\textit{Absolute error after noise-estimation circuit mitigation with and without randomisation parallelisation.} A plot of the relative performance of noise-estimation error mitigation both with and without randomisation parallelisation. The absolute error of the mitigated expectation value output from each of the parallel circuits is recorded for the set of ideal expectation values $\{-1,-0.5,0.5,1\}$, this is then used to calculate the average normal post-mitigation error. For the randomised error mitigation, the average of the noisy outputs is mitigated using the average depolarizing parameter. The main assumption of noise-estimation circuit mitigation is depolarizing noise. As the technique appears to work better when randomisation is used this implies that the technique assumptions are more valid with than without randomisation. And that the effective error channels from randomisation more closely approximate depolarizing noise than is the case for the error channels of the individual parallel circuits.
}

\label{fig8}
\end{figure}
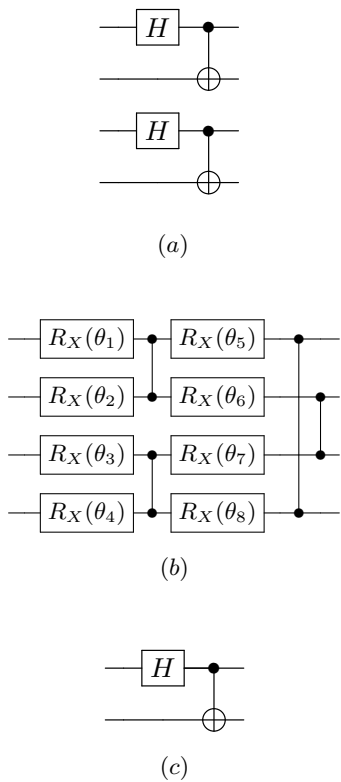
\begin{figure}
\centering
\begin{minipage}{0.4\textwidth}
\text{\hspace{0em}}
\begin{align*}\hspace{-0.7em}
\scalebox{1.15}{
\Qcircuit @C=0.65em @R=0.8em {
  &\qw &\gate{H} & \ctrl{1}  & \qw\\
 &\qw & \qw & \targ & \qw\\
 &\qw & \gate{H} & \ctrl{1}  & \qw\\
 &\qw & \qw & \targ  & \qw  \\
}
}
\end{align*}
\vspace{-0.7em}
\begin{align*}
(a)
\end{align*}
\label{fig:sub12}
\end{minipage}%
\vspace{0.6em}
\text{\hspace{1em}}
\newline
\begin{minipage}{0.4\textwidth}

\hspace{-3em}
\scalebox{1}{
\Qcircuit @C=0.65em @R=0.8em {
&\qw &\gate{R_{X}(\theta_1)} & \ctrl{1}     & \gate{R_{X}(\theta_5)} &\qw &\ctrl{3} & \qw & \qw \\
&\qw & \gate{R_{X}(\theta_2)} & \control \qw & \gate{R_{X}(\theta_6)} &\qw &\qw & \ctrl{1} &\qw \\
&\qw & \gate{R_{X}(\theta_3)} & \ctrl{1}     & \gate{R_{X}(\theta_7)} &\qw &\qw & \control \qw & \qw \\
&\qw & \gate{R_{X}(\theta_4)} & \control \qw & \gate{R_{X}(\theta_8)} &\qw & \control \qw & \qw  & \qw\\
}
}

\vspace{-0.4em}
\begin{align*}
\hspace{-1.75em}(b)
\end{align*}

\label{fig:sub22}
\end{minipage}

\vspace{1.6em}
\text{\hspace{1em}}
\newline
\begin{minipage}{0.4\textwidth}

\hspace{-3em}
\scalebox{1.15}{
\Qcircuit @C=0.65em @R=0.8em {
  &\qw &\gate{H} & \ctrl{1}  \qw & \qw\\
 &\qw & \qw & \targ & \qw\\
}
}

\vspace{-0.4em}
\begin{align*}
\hspace{-1.75em}(c)
\end{align*}

\label{fig:sub22}
\end{minipage}

\caption{\textit{Circuits used in the quantum error mitigation experiments.} Circuit diagrams illustrating the circuits used in the experiments for (a) measurement error mitigation, (b) noise-estimation circuit mitigation, and (c) stabilising time-dependent noise.}
\label{fig9}
\end{figure}

\subsection*{Improving noisy circuit performance predictability and robustness to time-dependent error}

\begin{figure}[!htb]
\text{\hspace{0em}}\includegraphics[width=0.99\linewidth]{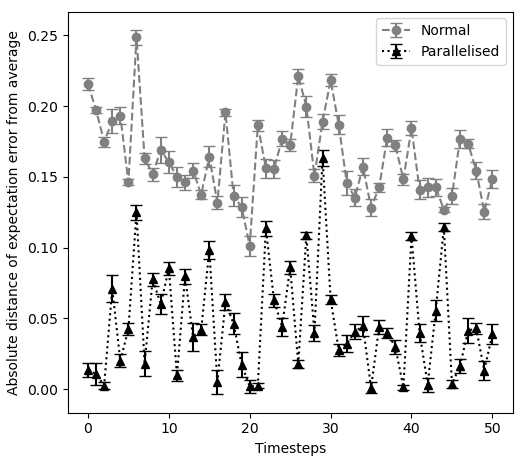}
\centering
\caption{\textit{Absolute difference of expectation value error from the time-series average error with and without parallelisation.} A plot of variation of the absolute distance of expectation value error from the time-series average with time from numerical simulation; both with and without parallelisation. Two qubit Bell states were prepared and measured in the computational basis. The effects of time-dependent noise were simulated by randomly generating error values from the uniform distribution $U(0,0.2)$, and using these as error parameters for a Pauli bit-flip noise channel. Ten two qubit parallel circuits were used in the experiment, which ran over the course of 50 timesteps.
}

\label{figtime}
\end{figure}

Metrics quantifying average error rates are often used in quantum device benchmarking and as indicators of noisy circuit performance. For example, the suite of techniques collectively referred to as randomized benchmarking are primarily focused on assessing quantum device performance by characterising average gate error rates \cite{knill_randomized_2008,helsen_general_2022,proctor_what_2017,proctor_measuring_2022}. 
There are a number of issues with using gate performance as a direct estimator of circuit performance. 
In particular, complex propagation and accumulation of gate errors can lead to unpredictable device performance when running computations. There are also benchmarking techniques like quantum volume that quantify how well random circuits of different widths and depths can be run \cite{cross_validating_2019}. The problem with metrics of this kind is the potential divergence between how noise affects the performance of random and structured circuits. 
Errors can constructively interfere during propagation, accumulating in a maximally damaging way and giving rise to worse than predicted circuit performance. Alternatively, errors can also destructively interfere, cancelling out and causing better than expected performance. Without exponentially costly simulation of the quantum noise behaviour, it is difficult to understand how errors are accumulating and their effect on the computational output. 

Total variation distance (TVD) is a commonly used distance measure for probability distributions \cite{nielsen_quantum_2010}. 
With $N$ parallel circuits, the set of TVDs for the noisy classical output distributions of the parallel circuits is the set $\{\text{TVD}(\text{P}(\rho_i), \text{P}(\rho_{\text{ideal}}))\}_{i=1}^{N}$. Where $\text{P}(\rho_i)$ is the classical probability distribution from measuring the noisy output state $\rho_i$, which depends on the error channels and propagation of the $i^{\text{th}}$ parallel circuit, and $\text{P}(\rho_{\text{ideal}}))$ is the ideal output distribution. The total variation distance of the effective output distribution from parallelisation is $\text{TVD}_{\text{eff}}=N^{-1}\sum^N_i \text{TVD}(\rho_i, \rho_{\text{ideal}})$, and should be bounded between the best and worst case TVDs, denoted $\text{TVD}_{\text{best}}$ and $\text{TVD}_{\text{worst}}$ respectively. So that
\begin{equation}
\text{TVD}_{\text{best}} \leq \text{TVD}_{\text{eff}} \leq \text{TVD}_{\text{worst}}.
\end{equation}
Parallelisation leads to different error propagation and accumulation behaviour for the parallel circuits being averaged over in the combined output. Strongly constructive or destructively interfering error propagation should, to some extent at least, average out. Also, the results in Fig. \ref{fig8} seem to indicate that randomisation causes the effective errors to more closely approximate depolarizing noise than the noise for the individual circuits. In theoretical analysis of noisy circuit behaviour, noise is often modelled as depolarizing due to the simplicity of its effect on the quantum state; it causes a decay towards the maximally mixed state with increasing circuit depth. If randomisation parallelisation indeed results in effective error channels that more closely approximate depolarizing channels, then it could help circuit performance to correspond more accurately with what is indicated by information provided by benchmarking techniques. 

Another factor that can reduce the accuracy of predictions of noisy circuit performance is time-dependent error. Parallelistion can be used render circuit performance more robust to the effects of time-dependent error. Numerical simulation of the effects of parallelisation on time-dependent errors indicate a consistent reduction in the variation of the parallelised error as compared to individual circuit error. The results of the experiment are shown in Fig. \ref{figtime}. The circuit shown in Fig. \ref{fig9} (c) was used in ten parallel circuits to produce the two qubit Bell state
\begin{equation}
\ket{\psi}=\frac{1}{\sqrt{2}}(\ket{00}+\ket{11}).
\end{equation}
A time-dependent bit-flip error channel was simulated for each of the circuits, with different error parameters randomly generated from the uniform distribution $U(0,0.2)$ at each timestep for each circuit. Parallelisation demonstrates reduced variation in time of the absolute difference of the expectation value error from the time-series average. Which indicates parallelisation's potential utility in stabilising the effects of time-dependent errors and in rendering noisy circuit performance more predictable.

\section{Summary and outlook}

We have presented methods to generate less complex error channels using quantum circuit parallelisation for error symmetrisation and randomisation. We refer to the action of simplifying errors by these methods as symmetry error complexity reduction and randomisation error complexity reduction, respectively. 
Symmetry error complexity reductions were detailed for the instances of reflection, rotation and permutation symmetry. 
While randomisation reductions were analysed in the context of two different error models. In the first, the resulting effective error channel was shown to be a global depolarizing channel. In the second, the effective error channel converged on a convex combination of depolarizing channels. 
The symmetrisation and randomisation methods were also combined to provide additional reductions. 
In the initial analysis, a single error channel was considered, and these results were then generalised to multiple error channels. We then proposed some practical applications for these methods. The use of symmetry error complexity reductions to lower the sample cost of measurement error mitigation. The implementation of randomisation reductions to improve the effectiveness of noise-estimation circuit mitigation. And the general application of these methods to improve the predictability of noisy circuit performance and robustness to time-dependent error. We discuss how this could improve circuit performance predictions based upon average gate performance metrics provided by noise characterisation techniques like randomized benchmarking \cite{magesan_efficient_2012, mckay_three-qubit_2019, helsen_general_2022-1, epstein_investigating_2014, magesan_scalable_2011, onorati_randomized_2019, magesan_characterizing_2012}. Experiments were run on superconducting hardware and numerically simulated to demonstrate these applications for selected test circuits.

There are many possible future research directions that would build upon this work. Firstly, experimental examinations of the validity of the error assumptions used and their applicability for different types of quantum hardware. This would test the degree to which the different assumptions used for the error complexity reductions are compatible with different types of hardware. This might be done by using a Pauli noise characterisation technique, like, for example, cycle benchmarking \cite{erhard_characterizing_2019, hashim_randomized_2021, chen_learnability_2023, yeter-aydeniz_measuring_2022, hashim_benchmarking_2022}. 
Likewise, experimental testing of the hypothesis that randomisation would improve the accuracy of circuit performance predictions based on techniques that provide information on average gate performance \cite{knill_randomized_2008,helsen_general_2022,proctor_what_2017,proctor_measuring_2022}. 
There is also great scope for investigation into further reductions possible with different device topologies and different types of symmetry. 
And in the analysis of the compilation problem of finding circuit structures that are invariant under transformation according to the different types of symmetry present in a given topology. Particularly regarding topologies similar to those available on different types of existing quantum hardware. 
It would also be interesting to test how these methods combine with variational quantum algorithms. It is well-known that variational algorithms have some natural robustness to noise \cite{mcclean_theory_2016,omalley_scalable_2016,fontana_non-trivial_2022}, perhaps simplifying the noise either by symmetry or randomisation enhances this robustness. What the effect would be of imposing symmetries on the rotation gates within variational circuits in such a way that the error propagation maintains symmetry, and whether there is a trade-off between error simplification and restricting the expressivity of the circuit. 
Another area that could merit further investigation is how error complexity reductions combine with different error mitigation techniques \cite{yang_efficient_2022,berg_probabilistic_2022,mezher_mitigating_2022,temme_error_2017,endo_practical_2018,strikis_learning-based_2021}. Particularly in analysing how simplifying errors affects the scalability and effectiveness of the mitigation.

\text{ }
\section*{Acknowledgements}

We would like to thank Mina Doosti, Rawad Mezher, Robert Booth and Shelagh Casebourne for reading versions of this manuscript and providing insightful feedback. We would also like to thank Dominik Leichtle, Mina Doosti, Ross Grassie, Robert Booth and Theodoros Kapourniotis for interesting discussions; and Louise Smith for help with creating the diagrams. J.M. and D.S. are grateful for the support of grant funding from the School of Informatics at the University of Edinburgh.

\appendix
\vspace{2em}
\label{appendix}

\section{Randomisation parallelisation for a 2-qubit subset of a 3-qubit circuit \label{example_rand}}

A three qubit circuit is parallelised across $N$ different sets of three qubits, each of which is acted on by a different three qubit stochastic Pauli channel. A two qubit subset of the circuit is labelled by the tuple $q=(1,2)$, denoting the subset consisting of the first two qubits. The coefficients acting on this subset of qubits for the error channels of the different parallel circuits, $\{\mathcal{E}^\mathbf{G}_{i,q}\}_{i=1,\ldots,N}$, are denoted $c_{\mathbf{G},i}\in \{c_{\mathbf{G},1},\ldots, c_{\mathbf{G},N}\}$, where $\mathbf{G}= \{X,Y,Z\}^{\otimes 2}$, $c^{\mathbf{G}}_{i,q} \sim \mathcal{D}(\eta_q, [0,1])$, and $i \in {1,\ldots,N}$. Averaging over the coefficients of the parallel circuits gives the expected values of the coefficients relating to the effective channel. For example, averaging the $c_{X_1 X_2}$ operators across all $N$ parallel circuits gives an expected value of $$c_{X_1 X_2}'=\frac{1}{N}\sum_{i=1}^N c_{X_1 X_2}^i$$ where $i\in\{1,\ldots,N\}$ denotes the $i^{\text{th}}$ parallel circuit. The two qubit Pauli operators acting on this subset are $G_1 G_2 \in \{X_1 X_2, X_1 Y_2, X_1 Z_2, Y_1 X_2, Y_1 Y_2, Y_1 Z_2, Z_1 X_2, Z_1 Y_2, Z_1 Z_2\}$. The coefficient of operator ${G_1 G_2}$ averaged over the parallel circuits will converge to the expected value with increasing $N$, so that

\begin{equation}
\begin{split}
&\text{Pr}[(c_{G_1 G_2}' - \eta_{G_1 G_2})>\epsilon]\leq 2e^{-2N\epsilon^2}. \\
\end{split}
\end{equation}
Where $\eta_{G_1 G_2}$ is the mean of probabilities of the ${G_1 G_2}$ from all of the parallel circuits. Due to the shared distributions of each of the coefficients over the parallel circuits, combining them in the effective channel results in the averaged probabilities approaching the same value $\eta$, i.e. $\eta_{X_1 X_2}=\eta_{X_1 Y_2}=\ldots=\eta_{Z_1 Z_2}=\eta$.

\section{Quantum Hardware Details}

The experiments in Section \ref{app} were run on the Rigetti Aspen-M-3 superconducting quantum device.
This device consists of 79 superconducting transmon qubits arranged on an octagonal lattice. The two qubit native gates for these devices are $CZ$ and $XY$, and the single qubit native gates are $RZ(\theta)$ and $RX(\pi/2)$. By default, measurement is in the $Z$ basis and the initial $n$ qubit quantum state is $\ket{0}^{\otimes{n}}$.

\bibliography{Parallelisation.bib}

\end{document}